\numberwithin{equation}{section}
\newtheorem{theorem}{Theorem}[section]
\newtheorem{lemma}[theorem]{Lemma}
\newtheorem{proposition}[theorem]{Proposition}
\newtheorem{remark}[theorem]{Remark}
\newcommand{\ind}{\mathbf{1}}
\newcommand{\R}{\mathbb{R}}
\newcommand{\Z}{\mathbb{Z}}
\newcommand{\N}{\mathbb{N}}
\renewcommand{\tilde}{\widetilde}
\renewcommand{\hat}{\widehat}
\newcommand{\cT}{{\ensuremath{\mathcal T}} }
\newcommand{\bP}{{\ensuremath{\mathbf P}} }
\newcommand{\bE}{{\ensuremath{\mathbf E}} }
\DeclareMathSymbol{\leqslant}{\mathalpha}{AMSa}{"36} 
\DeclareMathSymbol{\geqslant}{\mathalpha}{AMSa}{"3E} 
\DeclareMathSymbol{\eset}{\mathalpha}{AMSb}{"3F}     
\newcommand{\dd}{\,\text{\rm d}}             
\newcommand{\sumtwo}[2]{\sum_{\substack{#1 \\ #2}}} 
\newcommand{\bbE}{{\ensuremath{\mathbb E}} }
\newcommand{\bbN}{{\ensuremath{\mathbb N}} }
\newcommand{\bbP}{{\ensuremath{\mathbb P}} }
\newcommand{\bbZ}{{\ensuremath{\mathbb Z}} }
\newcommand{\ga}{\alpha}
\newcommand{\gb}{\beta}
\newcommand{\gd}{\delta}
\newcommand{\gep}{\varepsilon}       
\newcommand{\go}{\omega}
\def\captionfont@{\footnotesize}
\def\captionheadfont@{\scshape}
\long\def\@makecaption#1#2{%
  \vspace{2mm}
  \setbox\@tempboxa\vbox{\color@setgroup
    \advance\hsize-6pc\noindent
    \captionfont@\captionheadfont@#1\@xp\@ifnotempty\@xp
        {\@cdr#2\@nil}{.\captionfont@\upshape\enspace#2}%
    \unskip\kern-6pc\par
    \global\setbox\@ne\lastbox\color@endgroup}%
  \ifhbox\@ne 
    \setbox\@ne\hbox{\unhbox\@ne\unskip\unskip\unpenalty\unkern}%
  \fi
  \ifdim\wd\@tempboxa=\z@ 
    \setbox\@ne\hbox to\columnwidth{\hss\kern-6pc\box\@ne\hss}%
  \else 
    \setbox\@ne\vbox{\unvbox\@tempboxa\parskip\z@skip
        \noindent\unhbox\@ne\advance\hsize-6pc\par}%
\fi
  \ifnum\@tempcnta<64 
    \addvspace\abovecaptionskip
    \moveright 3pc\box\@ne
  \else 
    \moveright 3pc\box\@ne
    \nobreak
    \vskip\belowcaptionskip
  \fi
\relax
}
\def\writefig#1 #2 #3 {\rlap{\kern #1 truecm
\raise #2 truecm \hbox{#3}}}
\newcommand{\tf}{\textsc{f}}
\begin{document}

\title[Pinning models and disorder relevance]{Marginal relevance of disorder \\ 
for  pinning models}

\author{Giambattista Giacomin}
\address{
  Universit{\'e} Paris Diderot (Paris 7) and Laboratoire de Probabilit{\'e}s et Mod\`eles Al\'eatoires (CNRS U.M.R. 7599),
U.F.R.                Math\'ematiques, Case 7012 (Site Chevaleret),
                75205 Paris cedex 13, France
}
\email{giacomin\@@math.jussieu.fr}
\author{Hubert Lacoin}
\address{
  Universit{\'e} Paris Diderot (Paris 7) and Laboratoire de Probabilit{\'e}s et Mod\`eles Al\'eatoires (CNRS U.M.R. 7599),
U.F.R.                Math\'ematiques, Case 7012 (Site Chevaleret),
                75205 Paris cedex 13, France
}
\email{lacoin\@@math.jussieu.fr}
\author{Fabio Lucio Toninelli}
\address{CNRS and ENS Lyon, Laboratoire de Physique, 46 All\'ee d'Italie, 
69364 Lyon, France
}
\email{fabio-lucio.toninelli@ens-lyon.fr}
\date{\today}

\begin{abstract}
   The effect of disorder on pinning and wetting models has attracted
  much attention in theoretical physics ({\sl e.g.}
  \cite{cf:FLNO,cf:DHV}). In particular, it has been predicted on the basis of the {\sl Harris criterion} that
  disorder is {\sl relevant} (annealed and quenched model have
  different critical points and critical exponents) if the return
  probability exponent $\ga $, a positive number that characterizes
  the model, is larger than $1/2$. Weak disorder has been predicted to
  be {\sl irrelevant} ({\sl i.e.} coinciding critical points and exponents)
  if $\ga <1/2$.  Recent mathematical work (in particular
  \cite{cf:Ken,cf:DGLT,cf:GT_cmp,cf:GT_prl}) has put these predictions on firm
  grounds.  In {\sl renormalization group} terms, the case $\ga=1/2$ is
  a {\sl marginal} case and there is no agreement in the literature as
  to whether one should expect disorder relevance \cite{cf:DHV} or
  irrelevance \cite{cf:FLNO} at marginality. The question is
  particularly intriguing also because the case $\ga=1/2$ includes the
  classical models of 
two-dimensional wetting of a rough substrate, of pinning of directed
  polymers on a defect line in dimension $(3+1)$ or $(1+1)$ and of
  pinning of an heteropolymer by a point potential in
  three-dimensional space.  Here we prove disorder relevance both for
  the general $\ga=1/2$ pinning model and for the hierarchical version
  of the model proposed in \cite{cf:DHV}, in the sense that we prove a
  shift of the quenched critical point with respect to the annealed
  one.  In both cases we work with Gaussian disorder and we show that
  the shift is at least of order $\exp(-1/\beta^4)$ for $\beta$ small,
  if $\beta^2$ is the disorder variance. 
    \\ \\ 2000
  \textit{Mathematics Subject Classification: 82B44, 60K37, 60K05,
  82B41 } \\ \\ \textit{Keywords: Pinning and Wetting Models,
  Hierarchical Models on Diamond Lattices, Quenched Disorder, Harris
  Criterion, Fractional Moment Estimates, Coarse Graining}
\end{abstract}

\maketitle

\section{Introduction}

\subsection{Wetting and pinning on a defect line in $(1+1)$-dimensions}
\label{sec:intro1}
The intense activity aiming at understanding phenomena like wetting in
two dimensions \cite{cf:Abraham} 
and pinning of polymers by a defect line \cite{cf:FLN} has
led several people to focus on a class of simplified models based on
random walks. In order to describe more realistic, spatially
inhomogeneous situations, these models include disordered
interactions.  While a very substantial amount of work has been done,
it is quite remarkable that some crucial issues are not only
mathematically open (which is not surprising given the presence of
disorder), but also controversial in the  physics literature.

Let us start by introducing the most basic, and most studied, model in
the class we consider (it is the case considered 
in \cite{cf:FLNO,cf:DHV}, but also in \cite{cf:BM,cf:GN,cf:GS1,cf:GS2,cf:SC,cf:TangChate},  up to some inessential details, although 
the notations used by the various authors are quite different).
Let $S=\{S_0,S_1, \ldots\}$ be a simple
symmetric random walk on $\bbZ$, {\sl i.e.}, $S_0=0$ and $\{ S_n-S_{n-1}\}_{n \in
  \bbN}$ is an IID sequence (with law $\bP$) of random
variables taking values $\pm 1$ with probability $1/2$.  
It  is  better to take a directed walk viewpoint, that is to consider
the process $\{ (n, S_n)\}_{n=0,1, \ldots}$.
This random
walk is the {\sl free model} and we want to understand
the situation where the walk interacts with a substrate or with a defect
line that provides {\sl disordered} ({\sl e.g.}  random)
rewards/penalties each time  the walk hits it (see Fig.~\ref{fig:SRWwetting}).  The walk
may or may not be allowed to take negative values: we call {\sl
  pinning on a defect line} the first case and {\sl wetting of a
  substrate} the second one. 
  It is by now well understood that these two
cases are equivalent and we briefly discuss the wetting case only in
the caption of Figure~\ref{fig:SRWwetting}: the general model we will
consider covers both wetting and pinning cases.  The interaction is
introduced via the Hamiltonian
\begin{equation}
\label{eq:Haus}
  H_{N,\go}(S):=-\sum_{n=1}^{N}\left(\gb\go_n+h-\log \bbE(\exp(\gb \go_1))\right)\ind_{\{S_n=0\}},
\end{equation}
where $N\in 2\N$ is the system size, $h$ (homogeneous pinning
potential) is a real number, $\go:=\{\go_1,\go_2,\ldots\}$ is a
sequence of IID centered random variables with finite exponential
moments (in this work, we will restrict
to the Gaussian case), $\beta\ge0$ is the disorder strength and $\bbE$
denotes the average with respect to $\go$. It will be soon clear what
is the notational convenience in introducing the non-random term $\log
\bbE(\exp(\gb \go_1))$ (which could be absorbed into $h$ anyway).

\begin{figure}[ht]
\begin{center}
\leavevmode
\epsfxsize =10.4 cm
\psfragscanon
\psfrag{0}{$0$}
\psfrag{N}{$N$}
\psfrag{Sn}{$S_n$}
\psfrag{n}{$n$}
\psfrag{t0}{$\!\!\!\!\!\tau_0(=0)$}
\psfrag{t1}{$\tau_1$}
\psfrag{t2}{$\tau_2$}
\psfrag{t3}{$\tau_3$}
\psfrag{t4}{$\tau_4$}
\psfrag{t5}{$\tau_5(=N/2)$}
\psfrag{om2}{$\tilde\go_2$}
\psfrag{om8}{$\tilde\go_8$}
\psfrag{o12}{$\tilde\go_{12}$}
\psfrag{o14}{$\tilde\go_{14}$}
\psfrag{o16}{$\tilde\go_{16}$}
\psfrag{hlabel}{$\tilde\go_{n}\,:=\, \gb \go_n +h -\log \bbE \exp(\gb \go_1)$}
\psfrag{pinning}{\small trajectory of the pinning model}
\psfrag{wetting}{\small trajectory of the wetting model}
\epsfbox{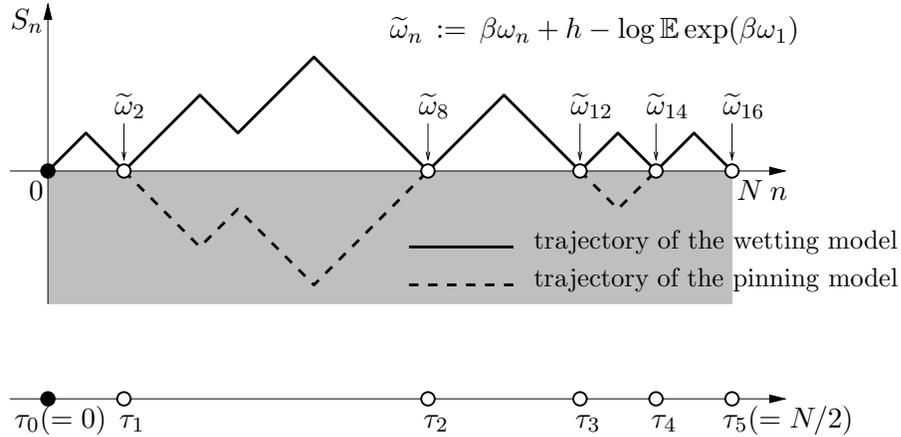}
\end{center}
\caption{In the top a random walk trajectory, pinned at $N$, which
 is not allowed to enter the lower half-plane (the shadowed region
should be regarded as a wall). The trajectory collects the {\sl
charges} $\tilde \go_n$ when it hits the wall.  The question is
whether the rewards/penalties collected pin the walk to the wall or
not. The precise definition of the wetting model is obtained by multiplying
the numerator in the right-hand side of \eqref{eq:Paus} by the indicator
function of the event $\{S_j \ge 0, \, j=1, \ldots, N\}$ (and consequently
modifying the partition function $Z_{N,\go}$).  This model
is actually equivalent to the model  \eqref{eq:Paus} without a wall, 
 whose
trajectories  (dashed line) can visit the lower half plane, provided that
$h $ is replaced by $h-\log 2$ (see \cite[Ch.~1]{cf:Book}). The bottom
part of the figure illustrates the simple but crucial point that the
energy of the model depends only on the location of the points of
contact between walk and wall (or defect line); such points
form a renewal process, giving thus a natural generalized framework in
which to tackle the problem. In order to circumvent the annoying
periodicity two of the simple random walk we set $\tau_0=0$ and
$\tau_{j+1}:= \inf\{ n/2\ge \tau_j:\, S_n=0\}$. From the renewal 
process standpoint, introducing a wall just leads to a {\sl terminating} renewal
(see text).}
\label{fig:SRWwetting}
\end{figure}

The Gibbs measure $\bP_{N,\go}$ for the pinning model is then defined as
\begin{equation}
\label{eq:Paus}
\frac{\dd \bP_{N,\go}}{\dd \bP}(S)=\frac{e^{-H_{N,\go}(S)}\ind_{\{S_N=0\}}}{Z_{N,\go}}
\end{equation}
and of course $Z_{N,\go}:=\bE[\exp(-H_{N,\go}(S))\ind_{\{S_N=0\}}]$,
where $\bE$ denotes expectation with respect to the simple
random walk  measure  $\bP$.
Note that we imposed the boundary condition $S_N=0=S_0$ (just to be
consistent with the rest of the paper).  
It is well known that the model undergoes a
localization/delocalization transition as $h$ varies: if $h$ is
larger than a certain threshold value $h_c(\beta)$ ({\sl quenched
  critical point}) then, under the Gibbs measure, the system is {\sl
  localized}: the contact fraction, defined as
\begin{equation}
\frac1N\bE_{N,\go}\left[\sum_{n=1}^N\ind_{\{S_n=0\}}\right],
\end{equation}
tends to a positive limit for $N\to\infty$. 
On the other hand, for $h<h_c(\gb)$ the system is {\sl delocalized}, {\sl i.e.}, the limit is zero.

The result we just stated is true also in absence of disorder
($\gb=0$) and a remarkable fact for the homogeneous ({\sl i.e.}
non-disordered) model is that it is exactly solvable (\cite{cf:Fisher,cf:Book} and 
references therein).  In particular, 
we know that $h_c(0)=0$, {\sl i.e.}, an arbitrarily small
reward is necessary and sufficient for pinning, and that the free
energy behaves quadratically close to criticality.  If now we consider
the {\sl annealed measure} corresponding to \eqref{eq:Paus}, that is
the model in which one replaces both $\exp(-H_{N,\go}(S))$ and
$Z_{N,\go}$ by their averages with respect to $\go$, one readily
realizes that the annealed model is a homogeneous model, and
precisely the one  we obtain by setting $\gb=0$ in
\eqref{eq:Paus}.  Therefore one finds that the {\sl annealed critical
  point} $h_c^a(\gb)$ equals $0$ for every $\beta$, and that the {\sl annealed free
  energy} $\tf^a(\gb,h)$ behaves, for $h\searrow 0$, like
$\tf^a(\gb,h)\sim const\times h^2$, while it is zero for $h\le 0$.

Very natural questions are: does $h_c(\gb)$ differ
from $h_c^a(\gb)$? Are quenched and annealed critical exponents
different?  As we are going to explain, the first question finds contradictory answers in the
literature, while no clear-cut statement can really be found about the
second. Below we are going to argue that 
these two questions are intimately related, but first we make
 a short detour in order to define a more
general class of models. It is in this more general context that the
role of the disorder and the specificity of the simple random walk
case can be best appreciated.

\subsection{Reduction to renewal-based models}
As argued in the caption of Figure~\ref{fig:SRWwetting}, the basic
underlying process is the {\sl point process} $\tau:=\{\tau_0, \tau_1,
\ldots\}$, which is a renewal process (that is $\{
\tau_{n}-\tau_{n-1}\}_{n \in \N}$ is an IID sequence of 
integer-valued random variables). We set
$K(n):=\bP(\tau_1=n)$. It is well known that, for the simple random
walk case, $\sum_{n\in\N} K(n)=1 $ (the walk is recurrent) and
$K(n)\stackrel{n \to \infty} \sim 1/(\sqrt{4\pi}n^{3/2})$. This
suggests the natural generalized framework of  models based
on discrete renewal processes such that
\begin{equation}
\label{eq:Kintro}
\sum_{n\in \N} K(n) \le 1 \ \text { and } \ K(n)\stackrel{n \to \infty} \sim \frac{C_K}{n^{1+\ga}},
\end{equation}
with $C_K>0$ and $\ga>0$. 
We are of course employing
the standard notation $a_n \sim b_n$ for 
$\lim a_n/b_n =1$.
The case $\sum_{n\in \N} K(n) < 1$ 
 refers to transient (or {\sl terminating}) renewals (of which the wetting 
 case is an example), see also 
Remark \ref{rem:trans} below. This
framework includes for example the simple random walk in $d \ge 3$, for which 
$\sum_{n\in \N} K(n) < 1$ and $\ga =(d/2)-1$, but it is of course
much more general.
We will come back with more details on this model, but let us
just say now that the definition of the Gibbs measure
is given in this case by \eqref{eq:Haus}-\eqref{eq:Paus}, with
$S$ replaced by $\tau$ in the left-hand side and with
the event $\{S_n=0\}$ replaced by the event 
$\{\text{there is } j \text{ such that } \tau_j=n\}$.

\subsection{Harris criterion and disorder relevance: the state of the art}
The questions mentioned at the end of Section \ref{sec:intro1} are
typical questions of disorder relevance, {\sl i.e.}, of stability of
critical properties with respect to (weak) disorder. In
renormalization group language, one is asking whether or not disorder
drives the system towards a new fixed point.  A heuristic tool which
was devised to give an answer to such questions is the  {\sl
Harris criterion} \cite{cf:Harris}, originally proposed for random
ferromagnetic Ising models. The Harris criterion states that disorder
is relevant if the specific heat exponent of the pure system is
positive, and irrelevant if it is negative. In case such critical
exponent is zero (this is called a {\sl marginal case}), the Harris
criterion gives no prediction and a case-by-case delicate analysis is
needed.  Now, it turns out that the random pinning model described
above is a marginal case, and from this point of view it is not
surprising that the question of disorder relevance is not solved yet,
even on heuristic grounds: in particular, the authors of
\cite{cf:FLNO} (and then also \cite{cf:GS1,cf:GS2} and, very recently,
\cite{cf:GN}) claimed that for small $\beta$ the quenched critical
point coincides with the annealed one (with our conventions, this
means that both are zero), while in \cite{cf:DHV} it was concluded
that they differ for every $\beta>0$, and that their difference is of
order $\exp(-const/\beta^2)$ for $\beta$ small (we mention
\cite{cf:BM,cf:SC,cf:TangChate} which  support this 
second possibility). Note that such a quantity is smaller than any
power of $\beta$, and therefore vanishes at all orders in
weak-disorder perturbation theory (this is also typical of marginal
cases).

In an effort to reduce the problem to its core, beyond the
difficulties connected to the random walk or renewal structure, a {\sl
  hierarchical pinning model}, defined on a diamond lattice,
   was introduced in \cite{cf:DHV}. In this
case, the laws of the partition functions for the systems of size $N$
and $2N$ are linked by a simple recursion. The role of $\alpha$ is
played here by a real parameter $B\in(1,2)$, which is related to the
geometry of the hierarchical lattice.  Also in this case, the Harris
criterion predicts that disorder is relevant in a certain regime
(here, $B<B_c:=\sqrt 2$) and irrelevant in another ($B>B_c$), while
$B=B_c$ is the marginal case where the specific heat critical exponent
of the pure model vanishes.  Again, the authors of \cite{cf:DHV}
predicted that disorder is marginally relevant for $B=B_c$, and that
the difference between annealed and quenched critical point behaves
like $\exp(-const/\beta^2)$ for $\beta$ small (they gave also
numerical evidence that the critical exponent is modified by
disorder).

Let us mention that hierarchical models based on diamond lattices have played an important role in 
elucidating the effect of disorder on various statistical mechanics 
models: we mention for instance \cite{cf:DG}.

The mathematical comprehension of the question of disorder relevance
in pinning models has witnessed remarkable progress lately.  First of
all, it was proven in \cite{cf:GT_cmp} that an arbitrarily weak (but
extensive) disorder changes the critical exponent if $\ga>1/2$ (the
analogous result for the hierarchical model was proven in
\cite{cf:LT}). Results concerning the critical points came later: in
\cite{cf:Ken,cf:T_cmp} it was proven that if $\ga<1/2$ then
$h_c(\gb)=0$ (and the quenched critical exponent coincides with the
annealed one) for $\gb$ sufficiently small (the analogous result for
the hierarchical model was given in \cite{cf:GLT}). Finally, the fact
that $h_c(\gb)>0$ for every $\gb>0$ (together with the correct
small-$\gb$ behavior) in the regime where the Harris criterion
predicts disorder relevance was proven in \cite{cf:GLT} in the
hierarchical set-up, and then in \cite{cf:DGLT,cf:AZ} in the
non-hierarchical one. One can therefore safely say that the
comprehension of the relevance question is by now rather solid, {\sl
  except in the marginal case} (of course some problems remain
open, for instance the determination of the value of the quenched
critical exponent in the relevant disorder regime, beyond the bounds
proved in \cite{cf:GT_cmp}).

 \subsection{Marginal relevance of disorder}
 In this work, we solve the question of disorder relevance for the
 marginal case $\alpha=1/2$ (or $B=B_c$ in the hierarchical
 situation), showing that {\sl quenched and annealed critical points
   differ for every disorder strength $\gb>0$}. We also give a
 quantitative bound, $h_c(\gb)\ge \exp(-const/\beta^4)$ for $\gb$
 small, which is however presumably not optimal.  The method we use is
 a non-trivial extension of the {\sl fractional moment -- change of
   measure method} which already allowed to prove disorder relevance
 for $B<B_c$ in \cite{cf:GLT} or for $\ga>1/2$ in \cite{cf:DGLT}. A
 few words about the evolution of this method may be useful to the
 reader. The idea of estimating non-integer moments of the partition
 function of disordered systems is not new: consider for instance
 \cite{cf:BPP} in the context of directed polymers in random
 environment, or \cite{cf:AizM} in the context of Anderson
 localization (in the latter case one deals with non-integer moments
 of the propagator). However, the power of non-integer moments in
 pinning/wetting models was not appreciated until \cite{cf:T_AAP},
 where it was employed to prove, among other facts, that quenched and
 annealed critical points differ for large $\gb$, irrespective of the
 value of $\ga\in(0,\infty)$. The new idea which was needed to treat
 the case of weak disorder (small $\gb$) was instead introduced in
 \cite{cf:GLT,cf:DGLT}, and it is a change-of-measure idea, coupled
 with an {\sl iteration procedure}: one changes the law of the
 disorder $\go$ in such a way that the new and the old laws are very
 close in a certain sense, but under the new one it is easier to prove
 that the fractional moments of the partition function are small. In
 the relevant disorder regime, $\ga>1/2$ or $B<B_c$, it turns out that
 it is possible to choose the new law so that the $\go_n$'s are still
 IID random variables, whose law is simply tilted with respect to the
 original one.  This tilting procedure is bound to fail if applied for
 arbitrarily large volumes, but having such bounds for sufficiently
 large, but finite, system sizes is actually sufficient because of an
 iteration argument (which appears very cleanly in the hierarchical
 set-up).

In order to deal with the marginal case we will instead introduce 
 a long-range anti-correlation structure
 for the $\go$-variables.  Such correlations are carefully chosen in
 order to reflect the structure of the two-point function of the
 annealed model and, in the non-hierarchical case, they are
 restricted, via a coarse-graining procedure 
inspired by \cite{cf:T_cg}, only to suitable {\sl
 disorder pockets}.

We mention also that one of us \cite{cf:Hubert} proved recently that
disorder is marginally relevant in a different version of the
hierarchical pinning model. What simplifies the task in that case is
that the Green function of the model is spatially inhomogeneous and
one can take advantage of that by tilting the $\go$-distributions in a
inhomogeneous way (keeping the $\go$'s independent). The Green
function of the hierarchical model proposed in \cite{cf:DHV} is
instead constant throughout the system and inhomogeneous tilting does
not seem to be of help (as it does not seem to be of help in the
non-hierarchical case, since it does not match with the coarse
graining procedure).

\smallskip 
 The paper is organized as follows: the hierarchical
(resp. non-hierarchical) pinning model is precisely defined in
Section \ref{sec:Hmodel} (resp. in Section \ref{sec:nHmodel}), where
we also state our result concerning marginal relevance of disorder.
Such result is proven in Section \ref{sec:Hproof} in the hierarchical
case, and in Section \ref{sec:nHproof} in the non-hierarchical one.

In order not to hide the novelty of the
idea with technicalities, we restrict ourselves to Gaussian disorder
and, in the case of the non-hierarchical model, we do not treat the
natural generalization where $K(\cdot)$ is of the form
$K(n)={L(n)}/{n^{3/2}}$ with $L(\cdot)$ a slowly varying function
\cite[VIII.8]{cf:Feller2}.  We plan to come back to both issues in a
forthcoming paper \cite{cf:kbodies}.

\section{The hierarchical model}

\label{sec:Hmodel}

Let $1<B<2$.
We study the following iteration which transforms a vector $\{R_n^{(i)}\}_{i\in\N}\in (\R^+)^\N$ into
a new vector $\{R_{n+1}^{(i)}\}_{i\in\N}\in(\R^+)^\N$:
\begin{equation}
  \label{eq:Rn+1}
  R^{(i)}_{n+1}\, =\, \frac{R_n^{(2i-1)} R_n^{(2i)}+(B-1)}B, 
\end{equation}
for $n\in\N\cup\{0\}$ and $i\in\N$.  

In particular, we are interested in
the case in which the initial condition is random and  given by
$R_0^{(i)}=e^{\beta\go_i-\beta^2/2+h}$, with $\go:=\{\go_i\}_{i\in\N}$
a sequence of IID standard Gaussian random variables and $h\in\R, \gb\ge0$.  We
denote by $\bbP$ the law of $\go$ and by $\bbE$
the corresponding average.  In this case, it is immediate to realize
that for every given $n$ the random variables $\{R_n^{(i)}\}_{i\in\N}$ are IID. We
will study the behavior for large $n$ of $X_n:=R_n^{(1)}$.

It is easy to see that the average of $X_n$  
 satisfies the  iteration
\begin{equation}
\label{eq:homomap}
  \bbE({X_{n+1}})\, =\, \frac{(\bbE{X_n})^2+(B-1)}B, 
\end{equation}
with initial condition $\bbE({X_0})=e^h$. The map \eqref{eq:homomap}
has two fixed points: a stable one, $\bbE X_n=(B-1)$, and an unstable
one, $\bbE X_n=1$. This means that if $0\le\bbE X_0<1$ then $\bbE X_n$
tends to $(B-1)$ when $n\to\infty$, while if $\bbE X_0>1$ then $\bbE
X_n$ tends to $+\infty$.

\begin{remark} \rm
  In \cite{cf:DHV} and \cite{cf:GLT}, the model with $B>2$ was
considered. However, the cases $B\in(1,2)$
and $B\in(2,\infty)$ are equivalent. Indeed, if $R_n^{(i)}$ satisfies
\eqref{eq:Rn+1} with $B>2$, it is immediate to see that $\hat
R_n^{(i)}:= R_n^{(i)}/(B-1)$ satisfies the same iteration but with $B$
replaced by $\hat B:=B/(B-1)\in(1,2)$. In this work, we prefer to work
with $B\in(1,2)$ because things turn out to be notationally simpler
({\sl e.g.}, the annealed critical point (defined in the next section) turns
out to be $0$ rather than $\log (B-1)$).  In the following, whenever
we refer to results from \cite{cf:GLT} we give them for $B\in(1,2)$.
\end{remark}

\subsection{Quenched and annealed free energy and critical point}
The random variable $X_n$ is interpreted as the partition function of
the hierarchical random pinning model on a diamond lattice of
generation $n$ (we refer to \cite{cf:DHV} for a clear discussion of
this connection).  The {\sl quenched free energy} is then defined as
\begin{equation}
  \label{eq:F}
  \tf(\beta,h):=\lim_{n\to\infty}\frac1{2^n}\bbE \log X_n.
\end{equation}
In \cite[Th. 1.1]{cf:GLT} it was proven, among other facts, that for every $\beta\ge 0,h\in\R$ the limit
\eqref{eq:F} exists and  it is non-negative. Moreover, $\tf(\beta,\cdot)$ is convex and non-decreasing.
On the other hand, the {\sl annealed free energy} is by definition
\begin{equation}
  \label{eq:Fa}
  \tf^a(\beta,h):=\lim_{n\to\infty}\frac1{2^n}\log \bbE X_n.
\end{equation}
Since the initial condition of \eqref{eq:Rn+1} was normalized so that $\bbE X_0=e^h$, it is
easy to see that the annealed free energy is nothing but the free energy of the non-disordered model:
\begin{equation}
  \label{eq:qz}
\tf^a(\beta,h)=\tf(0,h).  
\end{equation}
Non-negativity of the free energy allows to define the {\sl quenched critical} point in a natural way, as
\begin{equation}
  \label{eq:hc}
  h_c(\beta):=\inf\{h\in\R: \tf(\beta,h)>0\},
\end{equation}
and analogously one defines the {\sl annealed critical point}
$h_c^a(\beta)$. In view of observation \eqref{eq:qz}, one sees that
$h_c^a(\beta)=h_c(0)$.  Monotonicity and convexity of $\tf(\beta,\cdot)$ imply that $\tf(\beta,h)=0$ for
$h\le h_c(\beta)$.

For the annealed system, the critical point
and the critical behavior of the free energy around it are known (see
\cite{cf:DHV} or \cite[Th. 1.2]{cf:GLT}). What one finds is that for
every $B\in(1,2)$ one has $h_c(0)=0$, and there exists $c:=c(B)>0$
such that for all $0\le h\le 1$
\begin{equation}
\label{eq:alpha-1}
  c(B)^{-1}h^{1/\alpha}\le \tf(0,h)\le c(B) h^{1/\alpha},
\end{equation}
where 
\begin{equation}
  \label{eq:alpha}
  \alpha:=\frac{\log(2/B)}{\log 2}\in (0,1).
\end{equation}
Observe that $\alpha$ is decreasing as a function of $B$, and equals $1/2$ for $B=B_c:=\sqrt2$.

\subsection{Disorder relevance or irrelevance}

The main question we are interested in is whether quenched and annealed critical points differ, and 
if yes how does their difference behave for small disorder.
Jensen's inequality, $\bbE\log X_n\le \log \bbE X_n$, implies in particular that $\tf(\beta,h)\le \tf(0,h)$ so that
$h_c(\beta)\ge h_c(0)=0$. Is this inequality strict?

In \cite{cf:GLT} a quite complete picture was given, except in the marginal case $B=B_c$ which was left open:

\begin{theorem}\cite[Th. 1.4]{cf:GLT}
  If $1<B<B_c$, $h_c(\gb)>0$ for every $\gb>0$ and there exists $c_1>0$ such that for $0\le \beta\le 1$
  \begin{equation}
\label{eq:relh}
    c_1 \beta^{2\alpha/(2\alpha-1)}\le h_c(\beta)\le c_1^{-1} \beta^{2\alpha/(2\alpha-1)}.
  \end{equation}

If $B=B_c$ there exists $c_2>0$ such that for $0\le \beta\le 1$
\begin{equation}
\label{eq:margh}
  h_c(\beta)\le \exp(-c_2/\beta^2).
\end{equation}

If $B_c<B<2$ there exists $\beta_0>0$ such that $h_c(\beta)=0$ for every $0<\beta\le \beta_0$.
\end{theorem}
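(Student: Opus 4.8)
The proof reduces everything to two complementary estimates on $X_n=R_n^{(1)}$, together with a finite-volume criterion for localization. For the criterion, iterating the trivial bound $X_{n+1}\ge B^{-1}X_n^{(1)}X_n^{(2)}$ (from \eqref{eq:Rn+1}) and taking expectations shows that $n\mapsto 2^{-n}\big(\bbE\log X_n-\log B\big)$ is non-decreasing, whence
\[
\tf(\beta,h)\;=\;\sup_{n\ge0}\frac{\bbE\log X_n-\log B}{2^n},
\]
so that $\tf(\beta,h)>0$ as soon as $\bbE\log X_n>\log B$ for some $n$. In the other direction, fix $s\in(0,1)$ close to $1$; from $(x+y)^s\le x^s+y^s$ and independence, the fractional moment $a_n:=\bbE[X_n^s]$ obeys the autonomous recursion $a_{n+1}\le T_s(a_n):=B^{-s}\big(a_n^2+(B-1)^s\big)$, whose larger fixed point $u^*(s)=\tfrac12\big(B^s+\sqrt{B^{2s}-4(B-1)^s}\big)$ is real and lies in $((B-1)^s,1)$; since $T_s$ is increasing, any orbit started strictly below $u^*(s)$ stays bounded, so $a_{n_0}<u^*(s)$ for some $n_0$ forces $\sup_n a_n<\infty$ and therefore $\tf(\beta,h)\le\limsup_n (s2^n)^{-1}\log a_n=0$, i.e. $\tf(\beta,h)=0$. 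It thus remains, in the relevant ranges of $h$, to produce either a scale with $\bbE\log X_n$ large, or a scale with $\bbE[X_{n_0}^s]<u^*(s)$.

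\textbf{Second moment: the irrelevant case and the upper bounds on $h_c$.} With $y_n:=\bbE X_n$ (running the map \eqref{eq:homomap} from $y_0=e^h$) and $w_n:=\bbE[X_n^2]/y_n^2\ge1$, a short computation from \eqref{eq:Rn+1} gives the clean recursion
\[
w_{n+1}-1\;=\;(w_n^2-1)\,\lambda_n^2,\qquad \lambda_n:=\frac{y_n^2}{y_n^2+B-1},\qquad w_0=e^{\beta^2}.
\]
Fix $h>0$. While $y_n$ sits near the unstable fixed point $1$ of \eqref{eq:homomap} --- which, since $y_n-1\approx(2/B)^n h=2^{\alpha n}h$, lasts up to a scale $n^*$ with $2^{n^*}\asymp h^{-1/\alpha}$ --- one has $\lambda_n\approx 1/B$, hence $w_{n+1}-1\approx\tfrac{2}{B^2}(w_n-1)$ as long as $w_n-1$ is small; and $2/B^2$ is $<1$, $=1$ or $>1$ according as $B>B_c$, $B=B_c$ or $B<B_c$. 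For $B<B_c$, $w_n-1$ inflates from $w_0-1\asymp\beta^2$ and leaves a fixed neighbourhood of $0$ only at a scale $\bar n$ with $2^{\bar n}\asymp\beta^{-2/(2\alpha-1)}$; for $B=B_c$ the linear term is neutral and the quadratic one, $w_{n+1}-1\approx(w_n-1)+\tfrac12(w_n-1)^2$ (the discrete analogue of $\dot\epsilon=\tfrac12\epsilon^2$), brings $w_n-1$ up to size $1$ only after $\asymp1/\beta^2$ generations, i.e. $2^{\bar n}\asymp\exp(c/\beta^2)$; for $B>B_c$, $w_n-1$ contracts throughout the near-fixed-point phase and is still tiny when $y_n$ finally escapes, after which $y_n$ grows doubly-exponentially while $w_n-1$ at most doubles per step. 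Comparing $2^{n^*}\asymp h^{-1/\alpha}$ with $2^{\bar n}$: for $h\ge c_1^{-1}\beta^{2\alpha/(2\alpha-1)}$ ($B<B_c$), for $h\ge\exp(-c_2/\beta^2)$ ($B=B_c$), and for every $h>0$ provided $\beta\le\beta_0$ ($B>B_c$), there is a scale $n$ at which $y_n$ is large while $w_n$ stays bounded. Paley--Zygmund then gives $\bbP\big(X_n\ge y_n/2\big)\ge 1/(4w_n)$, bounded away from $0$, and an amplification step --- at a deeper scale $n+m$ a positive fraction of the $2^m$ i.i.d.\ sub-blocks being ``good'' ($X\ge y_n/2$) forces, through the recursion, $\bbE\log X_{n+m}>\log B$ --- yields $\tf(\beta,h)>0$ by the criterion above. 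This proves the $B>B_c$ statement and the upper bounds on $h_c$ in the other two regimes.

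\textbf{Change of measure: the lower bound $h_c(\beta)\ge c_1\beta^{2\alpha/(2\alpha-1)}$ for $1<B<B_c$.} The naive datum $a_0=\bbE[X_0^s]=e^{sh-s(1-s)\beta^2/2}$ fails to lie below $u^*(s)<1$ once $h>0$, so I would instead run the $T_s$-recursion not from generation $0$ but from a coarse-grained block of $\ell$ generations, estimating its fractional moment under a tilted disorder law $\tilde\bbP$ in which the $2^\ell$ variables $\omega_i$ are given mean $-\delta$. H\"older's inequality (exponents $1/s$ and $1/(1-s)$) gives
\[
\bbE[X_\ell^s]\;\le\;\big(\tilde\bbE X_\ell\big)^s\Big(\tilde\bbE\big[(\dd\bbP/\dd\tilde\bbP)^{1/(1-s)}\big]\Big)^{1-s}\;=\;\tilde y_\ell^{\,s}\,\exp\!\Big(\tfrac{s}{2(1-s)}\,2^\ell\delta^2\Big),
\]
where $\tilde y_\ell$ runs \eqref{eq:homomap} from $\tilde y_0=e^{h-\beta\delta}$. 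Taking $h-\beta\delta=-a<0$ makes $\tilde y_\ell\downarrow B-1$; choosing $\ell$ just large enough that $\tilde y_\ell^{\,s}$ is within $\tfrac12\big(u^*(s)-(B-1)^s\big)$ of $(B-1)^s$ (which needs $2^\ell\asymp a^{-1/\alpha}$) and $\delta$ so small that the exponential factor stays below $u^*(s)/\big((B-1)^s+\tfrac12(u^*(s)-(B-1)^s)\big)$ (which needs $\delta^2\asymp a^{1/\alpha}$, i.e. $\delta\asymp a^{1/(2\alpha)}$), one obtains $\bbE[X_\ell^s]<u^*(s)$. The surviving constraint is $h=\beta\delta-a\asymp\beta a^{1/(2\alpha)}-a$; optimizing the right-hand side over $a$ gives a maximum at $a\asymp\beta^{2\alpha/(2\alpha-1)}$, which --- since $\alpha>1/2$ --- is a \emph{strictly positive} admissible value $h\asymp\beta^{2\alpha/(2\alpha-1)}$. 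Hence $\tf(\beta,h)=0$ for $0\le h\le c_1\beta^{2\alpha/(2\alpha-1)}$, which together with the second-moment bound completes the $1<B<B_c$ case.

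\textbf{The main obstacle.} The genuinely delicate point is the quantitative bookkeeping of the marginal regime $B=B_c$: there the linearized $w$-recursion is exactly neutral, so the $\exp(-c_2/\beta^2)$ threshold is dictated entirely by the size of the \emph{quadratic} correction and by matching the blow-up time $\asymp1/\beta^2$ of $w_n-1$ against the escape time $\asymp\log(1/h)$ of $y_n$ from the unstable fixed point, and one must control the error terms tightly enough to remain in the region where $\lambda_n\approx1/B$ and $y_n-1\approx2^{\alpha n}h$ are valid. A second, more structural, difficulty (already needed in the relevant regime $B<B_c$) is the amplification step: the second moment controls $X_n$ only from above, so the rare atypically small values of $X_n$ must be shown harmless across the extra levels of the recursion, which is what forces one to work at an enlarged system size and is, presumably, why the resulting bounds on $h_c(\beta)$ are not optimal. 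The change-of-measure part, by contrast, is comparatively routine once $\ell$ and $\delta$ are chosen as above; its only real subtlety is that the tilt must be confined to a single block rather than applied globally, so that the Radon--Nikodym cost, which grows like $\exp(c\,2^\ell\delta^2)$, stays finite.
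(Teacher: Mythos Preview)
This theorem is quoted from \cite{cf:GLT} and is \emph{not} proven in the present paper, so there is no proof here to compare against directly. That said, the paper does describe the method of \cite{cf:GLT} in enough detail (see the paragraph after Lemma~\ref{th:moment}: ``In \cite{cf:GLT} we introduced the idea of changing the mean of the $\go$-variables, while keeping their IID character. This strategy was enough to prove disorder relevance for $B<B_c$\ldots'') that one can see your lower-bound argument---fractional moment plus a constant tilt of the $\go_i$'s on a block of size $2^\ell$---is exactly the approach used there. Your fractional-moment iteration $a_{n+1}\le B^{-s}(a_n^2+(B-1)^s)$ is a minor variant of the one in Section~4.2.1 of this paper, which works with $A_n=\bbE[(X_n-(B-1))_+^\gamma]$ instead of $\bbE[X_n^\gamma]$; the fixed-point structure and the conclusion are the same.

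Your second-moment sketch for the upper bounds and the irrelevant case is also along the right lines, and the exact recursion $w_{n+1}-1=\lambda_n^2(w_n^2-1)$ with $\lambda_n=y_n^2/(y_n^2+B-1)$ is correct and is indeed the engine behind the three regimes. The scaling conclusions (escape time $2^{n^*}\asymp h^{-1/\alpha}$ for $y_n$, blow-up time $2^{\bar n}\asymp\beta^{-2/(2\alpha-1)}$ or $\exp(c/\beta^2)$ for $w_n-1$) are right. The only step where your outline is genuinely thin is the ``amplification'' from a Paley--Zygmund lower bound on $\bbP(X_n\ge y_n/2)$ to $\bbE\log X_{n+m}>\log B$: to make this rigorous one needs the deterministic lower bound $X_n\ge x_n^{(0)}\nearrow B-1$ (mentioned in this paper just before Lemma~\ref{th:moment}) to control the contribution of the ``bad'' sub-blocks, and then a short combinatorial argument on the recursion. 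You have correctly flagged this as the main structural difficulty, and it is handled in \cite{cf:GLT}.
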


\medskip

The main result of the present work is that in the marginal case, the
two critical points {\sl do differ} for every disorder strength:
\begin{theorem}
\label{th:main}
Let $B=B_c$. For every $0<\beta_0<\infty$ there exists a constant
$0<c_3:=c_3(\beta_0)<\infty$ such that for every $0<\beta\le\beta_0$
  \begin{equation}
\label{eq:main}
    h_c(\beta)\ge \exp(-c_3/\beta^4).
  \end{equation}
\end{theorem}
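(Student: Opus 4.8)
The plan is to prove a lower bound on the quenched critical point by showing that, for $h$ of order $\exp(-c_3/\beta^4)$, the quenched free energy $\tf(\beta,h)$ vanishes. Following the fractional moment -- change of measure strategy, I would fix a small exponent $\theta\in(0,1)$ (e.g. $\theta=6/7$ as in previous work) and study $A_n:=\bbE(X_n^\theta)$. Since $x\mapsto x^\theta$ is concave and subadditive, the recursion \eqref{eq:Rn+1} yields, after taking $\theta$-th powers and expectations, an inequality of the schematic form $A_{n+1}\le B^{-\theta}\big(\bbE[(R_n^{(2i-1)}R_n^{(2i)})^\theta] + (B-1)^\theta\big)$; when the two factors are independent (which they are, the $R_n^{(i)}$ being IID) the cross term is $A_n^2$, so that $A_{n+1}\le B^{-\theta}(A_n^2+(B-1)^\theta)$. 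At $B=B_c=\sqrt2$ the contraction this would give is exactly marginal: the fixed point equation $a=2^{-\theta/2}(a^2+(\sqrt2-1)^\theta)$ governs things, and the naive iteration does not close. The whole point, as the introduction stresses, is that at marginality one cannot obtain a useful bound by tilting the $\go_i$ to be IID; instead one must introduce a long-range \emph{anti-correlation} structure among the $\go_i$'s that mimics the two-point function of the annealed model.

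Concretely, I would introduce a change of measure by choosing, for a system of size $N=2^n$ with $n=n(\beta)$ a (large) $\beta$-dependent integer, a new law $\tilde\bbP_n$ on $(\go_1,\dots,\go_{2^n})$ under which the $\go_i$ remain Gaussian but acquire negative correlations $\tilde\bbE[\go_i\go_j]\approx -\text{const}/(\text{distance in the hierarchical tree between }i\text{ and }j)$ — i.e. the covariance is $-$ a multiple of the Green's function / two-point function of the pure hierarchical model, which at $B=B_c$ decays like $1/(\log_2|i-j|)$ or the appropriate hierarchical analogue. One then writes, via Hölder (or Cauchy--Schwarz with the density $\frac{\dd\bbP}{\dd\tilde\bbP_n}$), $\bbE(X_n^\theta) = \tilde\bbE\big(X_n^\theta\,\tfrac{\dd\bbP}{\dd\tilde\bbP_n}\big)\le \big(\tilde\bbE(X_n^{\theta q})\big)^{1/q}\big(\tilde\bbE[(\dd\bbP/\dd\tilde\bbP_n)^{q'}]\big)^{1/q'}$. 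The Radon--Nikodym cost factor is controlled because the total correlation one injects is small — tuned so that its $L^2$-type cost is $O(1)$ — while under $\tilde\bbP_n$ the anti-correlations genuinely shrink $\tilde\bbE(X_n^{\theta q})$ by a factor strictly less than what the homogeneous recursion predicts, because the negative correlations discourage the partition function from being large. The parameters are balanced so that the effective gain per hierarchical level beats the marginal loss over the $n(\beta)\sim 1/\beta^2$ levels one needs, which is exactly where the $\exp(-c_3/\beta^4)$ threshold on $h$ comes from: one needs $h$ small enough that the initial condition $A_0=\bbE(X_0^\theta)=\exp(\theta h+ (\theta^2-\theta)\beta^2/2)$ is below $1$ by a margin comparable to $\beta^2$ per level over $1/\beta^2$ levels (hence $\log h \lesssim -1/\beta^4$), and then iterating the improved recursion drives $A_n\to 0$, forcing $\tf(\beta,h)=\lim 2^{-n}\bbE\log X_n\le\lim 2^{-n}\tfrac1\theta\log\bbE(X_n^\theta)=0$.

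The key technical steps in order: (i) set up the correlated Gaussian change of measure on a block of $2^n$ coordinates, with covariance a small negative multiple of the pure-model two-point function, and bound the Radon--Nikodym derivative cost (a Gaussian computation: $\tilde\bbE[(\dd\bbP/\dd\tilde\bbP_n)^{q'}]$ is finite and controlled provided the injected correlation matrix has small enough operator/Hilbert--Schmidt norm, which follows from $\sum_j 1/(\log_2 j)$-type bounds being summable after the $\beta$-dependent cutoff); (ii) prove the improved one-step (or one-block-of-$n$-steps) estimate showing $\tilde\bbE(X_n^{\theta q})$ is reduced — this is where one exploits that the recursion \eqref{eq:Rn+1} propagates the correlation structure in a way matched to the hierarchical geometry, so that the negative correlations systematically penalize large values; (iii) combine via Hölder and iterate the resulting contraction for the sequence $A_{kn}$, $k=1,2,\dots$, to conclude $A_{kn}\to0$; (iv) translate back to the free energy and extract the explicit $\exp(-c_3/\beta^4)$ dependence by optimizing $n=n(\beta)$ and the correlation strength against $\beta$.

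The main obstacle I expect is step (ii): correctly choosing the correlation structure so that it is simultaneously (a) cheap in Radon--Nikodym cost and (b) genuinely effective at shrinking the fractional moment at marginality, and then quantifying the gain precisely enough to beat the marginal recursion. Unlike the relevant-disorder case where an IID tilt suffices and the recursion is strictly contracting for free, here every estimate must be sharp to leading order in $\beta$, and one must verify that the hierarchical recursion \eqref{eq:Rn+1} does not destroy the carefully designed correlation profile as $n$ increases — i.e. that the coarse structure of $\tilde\bbE[\go_i\go_j]$ is approximately self-reproducing under the diamond-lattice iteration. Getting the bookkeeping of these $\beta$-dependent scales right (number of levels $\sim 1/\beta^2$, correlation amplitude, Hölder exponents) is what produces the fourth power of $\beta$ and is the delicate heart of the argument.
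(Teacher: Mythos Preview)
Your high-level picture is right: one introduces a Gaussian change of measure on $(\go_1,\ldots,\go_{2^n})$ with negative covariances proportional to the hierarchical two-point function $\bE_n[\delta_i\delta_j]$, controls the Radon--Nikodym cost via the Hilbert--Schmidt norm, and combines with a fractional moment bound. But several key mechanisms are misidentified.

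First, the scaling is wrong. You write $n(\beta)\sim 1/\beta^2$, but the paper takes $n\sim 1/\beta^4$, and this is not cosmetic. At $B=B_c$ one has exactly $\sum_{i\ne j}(\bE_n[\delta_i\delta_j])^2=n$, so after normalizing the covariance perturbation $V$ to $\|V\|=1$ one gets $V_{ij}=\bE_n[\delta_i\delta_j]/\sqrt n$. The Gaussian computation then gives $\tilde\bbE_n X_n\le e^{2^n h}\,\bE_n[e^{-(\beta^2\gep/2)\sqrt n\,Y_n}]$ with $Y_n:=n^{-1}\sum_{i\ne j}\delta_i\delta_j\bE_n[\delta_i\delta_j]$; making the exponent large requires $\beta^2\sqrt n$ large, hence $n\sim 1/\beta^4$, and then $h\le \zeta\,2^{-n}$ yields the $\exp(-c_3/\beta^4)$ bound. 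Your heuristic about ``a margin $\beta^2$ per level over $1/\beta^2$ levels'' does not match what actually happens.

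Second, there is no iteration of the change of measure. The paper applies it \emph{once} at a single large level $n$: H\"older with $p=1/\gamma$ reduces the task to bounding $\tilde\bbE X_n$ (the plain first moment under $\tilde\bbP_n$, not a fractional moment), and once one shows $A_n:=\bbE[(X_n-(B-1))_+^\gamma]$ falls below the fixed threshold $B^\gamma-2(B-1)^\gamma$, the \emph{bare} recursion $A_{m+1}\le B^{-\gamma}(A_m^2+2(B-1)^\gamma A_m)$ already drives $A_m\to0$. Your plan to iterate over blocks $A_{kn}$ with a new change of measure at each stage is unnecessary and would create substantial complications.

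Third, the decisive technical step you do not identify is showing that $Y_n$ above is of order one with uniformly positive probability. This is done via Paley--Zygmund together with the second-moment bound $\sup_n\bE_n[Y_n^2]<\infty$, which is a combinatorial calculation on the tree $\cT^{(n)}_{\{i,j,k,l\}}$. Without this, one cannot conclude that $\bE_n[e^{-(\beta^2\gep/2)\sqrt n\,Y_n}]$ is bounded away from $1$; this is precisely the ``step (ii)'' you flag as the main obstacle, and your proposal does not indicate how to close it.
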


\section{The non-hierarchical model}\label{sec:nHmodel}

We let $\tau:=\{\tau_0,\tau_1,\ldots\}$ be a renewal process of law
$\bP$, with inter-arrival law $K(\cdot)$, {\sl i.e.}, $\tau_0=0$ and
$\{\tau_i-\tau_{i-1}\}_{i\in\N}$ is a sequence of IID integer-valued
random variables such that
\begin{equation}
  \label{eq:K}
\bP(\tau_1=n)= : K(n)\stackrel{n\to\infty}\sim \frac {C_K}{n^{1+\alpha}},
\end{equation}
with $C_K>0$ and $\alpha>0$. We require that $K(\cdot)$ is a
probability on $\N$, which amounts to assuming that the renewal
process is recurrent. 
 We require also that $K(n)>0$ for every $n\in\N$, but this is inessential 
 and it is just meant to avoid making a certain number of remarks 
 and small detours in the proofs to take care of this point.

As in Section \ref{sec:Hmodel}, $\go:=\{\go_1,\go_2,\ldots\}$ denotes a sequence of
IID standard Gaussian random variables.
For a given system size $N\in\N$, coupling parameters
$h\in \R$, $\beta\ge 0$ and a given disorder realization $\go$ the
partition function of the model is defined by
\begin{equation}
  \label{eq:Znh}
  Z_{N,\go}:=\bE\left[e^{\sum_{n=1}^N(\beta\go_n+h-\beta^2/2)\delta_n}\delta_N\right],
\end{equation}
where $\delta_n:=\ind_{\{n\in\tau\}}$, while the quenched free energy is
\begin{equation}
  \label{eq:F_nh}
  \tf(\beta,h):=\lim_{N\to\infty}\frac1N\bbE\log Z_{N,\go},
\end{equation}
(we use the same notation as for the hierarchical model, since there
is no risk of confusion).  Like for the hierarchical model,
the limit exists and is non-negative \cite[Ch. 4]{cf:Book}, and one
defines the critical point $h_c(\beta)$ for a given $\beta\ge0$
exactly as in \eqref{eq:hc}. Again, one notices that the annealed free
energy, {\sl i.e.}, the limit of $(1/N)\log\bbE Z_{N,\go}$, is nothing but
$\tf(0,h)$, so that the annealed critical point is just $h_c(0)$.

\begin{remark} \rm
  With respect to most of the literature, our definition of the
  model is different (but of course completely equivalent) in that
  usually the partition function is defined as in \eqref{eq:Znh} with
  $h-\beta^2/2$ replaced simply by $h$. 
\end{remark}
The annealed (or pure) model can be exactly solved and in particular
it is well known \cite[Th. 2.1]{cf:Book} that, if $\alpha\ne 1$, there
exists a positive constant $c_K$ (which depends on $K(\cdot)$) such that
\begin{equation}
  \label{eq:annF}
  \tf(0,h)\stackrel{h\searrow0}\sim c_K h^{\max(1,1/\alpha)}.
\end{equation}
In the case $\alpha=1$, \eqref{eq:annF} has to be modified in that the
right-hand side becomes $\phi(1/h)h$ for some slowly-varying function
$\phi(\cdot)$ which vanishes at infinity \cite[Th. 2.1]{cf:Book}. In
particular, note that $h_c(0)=0$ so that $h_c(\beta)\ge0$ by Jensen's
inequality, exactly like for the hierarchical model.

\begin{remark} \rm
\label{rem:trans}
  The assumption of
  recurrence for $\tau$, {\sl i.e.},  $\sum_{n\in\N}K(n)=1$, is
  by no means a restriction. In fact, as it has been observed several
  times in the literature, if $\Sigma_K:=\sum_{n\in\N}K(n)<1$ one can
  define $\tilde K(n):=K(n)/\Sigma_K$, and of course the renewal
  $\tau$ with law $\tilde\bP(\tau_1=n)=\tilde K(n)$ is recurrent. Then,
it is immediate to realize from definition \eqref{eq:F_nh} that
\begin{equation}
\tf(\beta,h)=\tilde \tf(\beta,h+\log \Sigma_K),
\end{equation}
$\tilde \tf$ being the free energy of the model defined as in
\eqref{eq:Znh}-\eqref{eq:F_nh} but with $\bP$ replaced by $\tilde\bP$.
In particular, $h^a_c(\beta)=-\log \Sigma_K$.
This observation allows to apply Theorem \ref{th:main2} below, for
instance, to the case where $\tau$ is the set of returns to the origin
of a symmetric, finite-variance random walk on $\Z^3$ (pinning of a
directed polymer in dimension $(3+1)$): indeed, in this case 
\eqref{eq:K} holds with $\alpha=1/2$. For more details on this issue
we refer to \cite[Ch.~1]{cf:Book}.
\end{remark}

\subsection{Relevance or irrelevance of disorder}
Like for the hierarchical model, the 
question whether $h_c(\beta)$ coincides
or not with $h_c(0)$ for $\beta$ small has been recently solved, {\sl except in the marginal case}
$\ga=1/2$:

\begin{theorem}
If $0<\alpha<1/2$, there exists $\beta_0>0$ such that $h_c(\beta)=0$ for every $0\le \beta\le \beta_0$.
If $\ga=1/2$, there exists a constant $c_4>0$ such that for $\beta\le 1$
\begin{equation}
  h_c(\beta)\le \exp(-c_4/\beta^2).
\end{equation}
If $\ga>1/2$, $h_c(\gb)>0$ for every $\gb>0$ and, if in addition $\ga\ne1$,  there exists a constant $c_5>0$ such that if $\beta\le 1$
\begin{equation}
  c_5\beta^{\max(2\ga/(2\ga-1),2)}\le h_c(\beta)\le  c_5^{-1}\beta^{\max(2\ga/(2\ga-1),2)}.
\end{equation}
If $\ga=1$ there exist a constant $c_6>0$ and a slowly varying function $\psi(\cdot)$ vanishing at infinity such that 
for $\gb\le 1$
\begin{equation}
   c_6\beta^{2}\psi(1/\beta)\le h_c(\beta)\le   c_6^{-1}\beta^{2}\psi(1/\beta).
\end{equation}
\end{theorem}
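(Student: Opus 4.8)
My plan is to read the theorem as a synthesis. For $\ga\in(0,1/2)$ the assertion $h_c(\gb)=0$ for small $\gb$ is the content of \cite{cf:Ken,cf:T_cmp}, and for $\ga>1/2$ (as well as $\ga=1$) the \emph{lower} bounds on $h_c(\gb)$ — disorder relevance together with the sharp order of magnitude — are the fractional moment and change-of-measure estimates of \cite{cf:DGLT,cf:AZ}. What I would prove directly are the \emph{upper} bounds on $h_c(\gb)$ in the localized regimes, in particular $h_c(\gb)\le\exp(-c_4/\gb^2)$ at $\ga=1/2$; this is the non-hierarchical counterpart of \cite[Th.~1.4]{cf:GLT}, and the tool is the second moment method followed by a coarse-graining passage to the free energy.

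The starting point — where the $-\gb^2/2$ normalization in \eqref{eq:Znh} pays off — is that $\bbE Z_{N,\go}=Z_N^{(0,h)}:=\bE[e^{h\sum_{n=1}^N\delta_n}\delta_N]$ is the pure partition function, while expanding the square and averaging the Gaussian gives
\begin{equation}
\frac{\bbE\big[Z_{N,\go}^2\big]}{\big(\bbE Z_{N,\go}\big)^2}\,=\,\bE^{\otimes2}_{N,h}\Big[\exp\big(\gb^2\,\big|\tau^{(1)}\cap\tau^{(2)}\cap(0,N]\big|\big)\Big],
\end{equation}
where $\bE^{\otimes2}_{N,h}$ denotes expectation with respect to two independent copies of the \emph{pure} ($\gb=0$) pinning model of size $N$ at pinning strength $h$ (the second moment becomes a pair of renewals attracting each other with strength $\gb^2$ at their common points). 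Since $\tau^{(1)}\cap\tau^{(2)}$ is again a renewal, the relevant quantity is the expected number of its points up to scale $N$, which at $h=0$ equals $\sum_{n\le N}\bP(n\in\tau)^2$; by the local renewal asymptotics this is \emph{bounded} uniformly in $N$ when $\ga<1/2$, grows like $\log N$ when $\ga=1/2$, like $N^{2\ga-1}$ when $1/2<\ga<1$, and like $N$ when $\ga\ge1$. I would run the estimate with block length $N$ of order the annealed correlation length $\xi(h):=\tf(0,h)^{-1}$ — so $\xi(h)\asymp h^{-2}$ at $\ga=1/2$, $\asymp h^{-1/\ga}$ for $\ga<1$, $\asymp h^{-1}$ for $\ga>1$, by \eqref{eq:annF} — check that on that scale the pinning tilt $e^{h\sum\delta}$ on each copy perturbs the counts by at most a bounded factor, and conclude that the exponential moment above is $\le 2$ as soon as $\gb^2$ times the count is $O(1)$: at $\ga=1/2$ this reads $\gb^2\log(1/h)=O(1)$, i.e. $h\ge\exp(-c_4/\gb^2)$; for $1/2<\ga<1$ it reads $\gb^2 h^{-(2\ga-1)/\ga}=O(1)$, i.e. $h\ge c_5^{-1}\gb^{2\ga/(2\ga-1)}$; for $\ga>1$ it reads $h\ge c_5^{-1}\gb^2$; and for $\ga<1/2$ the sum converges, so the condition holds uniformly in $h>0$ once $\gb\le\gb_0(\ga,K)$, forcing $h_c(\gb)=0$.

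From $\bbE Z_{N,\go}^2\le 2(\bbE Z_{N,\go})^2$, the Paley--Zygmund inequality yields $\bbP\big(Z_{N,\go}\ge\tfrac12\bbE Z_{N,\go}\big)\ge p>0$. Then I would concatenate \emph{good} blocks: forcing $\tau$ to visit the block endpoints gives $Z_{MN,\go}\ge\prod_{k=1}^{M}Z_{N,\theta_{(k-1)N}\go}$, hence by super-additivity $\tf(\gb,h)\ge\frac1N\bbE\log Z_{N,\go}$; combining $\log Z_{N,\go}\ge\log(\tfrac12 Z_N^{(0,h)})$ on the good event with the crude bound $\log Z_{N,\go}\ge\log K(N)+h-\tfrac{\gb^2}2+\gb\go_N$ on its complement, one gets $\bbE\log Z_{N,\go}>0$ provided $N$ is a large enough multiple of $\xi(h)$ (so that $Z_N^{(0,h)}$ is large). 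Hence $\tf(\gb,h)>0$ and $h_c(\gb)\le h$, which is $\exp(-c_4/\gb^2)$ at $\ga=1/2$; the $\ga>1/2$ upper bounds and the $\ga<1/2$ statement come out the same way, while $\ga=1$ only requires replacing \eqref{eq:annF} by its slowly varying form \cite[Th.~2.1]{cf:Book} and carrying the slowly varying factors through, producing the bound with $\psi$.

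The genuinely delicate step is the renewal estimate behind the second moment at $\ga=1/2$: the intersection renewal $\tau^{(1)}\cap\tau^{(2)}$ is recurrent, but only \emph{marginally} (its renewal mass up to $n$ diverges merely like $\log n$), and one must bound $\bE^{\otimes2}_{N,h}\big[\exp(\gb^2|\tau^{(1)}\cap\tau^{(2)}\cap(0,N]|)\big]$ at the borderline scale $N\asymp\xi(h)$ with honest control of the $e^h$ tilt on each copy. The clean route is to condition on the intersection set, factorize the weight over the inter-intersection intervals, and observe that the resulting \emph{effective} inter-intersection kernel gets multiplied by $e^{\gb^2}$ at each point; it remains a sub-probability precisely when $\gb^2$ times the logarithmically divergent mass accumulated up to $\xi(h)$ stays bounded — which is exactly the condition $h\ge\exp(-c_4/\gb^2)$. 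Pinning down the constants in this renewal/Tauberian bookkeeping, uniformly in $h$ (and likewise in the $\ga\ne1/2$ and $\ga=1$ analogues), is the only real work; the rest is soft.
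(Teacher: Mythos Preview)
The paper does not give a proof of this theorem; it is presented as a summary of known results, with the $\ga\le 1/2$ statements and all critical-point upper bounds attributed to \cite{cf:Ken} and \cite{cf:T_cmp}, and the lower bounds for $\ga>1/2$ (including the case $\ga=1$) to \cite{cf:DGLT} and \cite{cf:AZ}. Your attribution matches the paper's exactly, so on that level your proposal is correct and there is nothing further to compare.

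Your sketch of the upper bounds via the second moment of $Z_{N,\go}$ follows the line of \cite{cf:Ken} and correctly identifies the governing object --- the exponential moment, under the two-replica pure Gibbs measure, of $|\tau^{(1)}\cap\tau^{(2)}\cap(0,N]|$ --- and the resulting balances ($\gb^2\log(1/h)=O(1)$ at $\ga=1/2$, $\gb^2 h^{-(2\ga-1)/\ga}=O(1)$ for $1/2<\ga<1$, etc.). The one step that does not close as literally written is the passage from Paley--Zygmund to $\bbE\log Z_{N,\go}>0$ at a \emph{fixed} multiple of $\xi(h)$. On the bad event your lower bound $\log K(N)\sim -\tfrac32\log N$ is of order $-\log\xi(h)$, which at $\ga=1/2$ is $\sim -1/\gb^2$; this swamps the good-event contribution $pC$ unless you let the multiple $C=C(\gb)$ diverge (of order $\log\xi(h)$). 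You must then re-verify the second-moment bound at the enlarged scale $N\sim C(\gb)\xi(h)$, under the $h$-tilted two-replica measure, and that is where the ``uniform-in-$h$ renewal control'' you flag is genuinely needed, not merely bookkeeping: the intersection renewal is recurrent under $\bP_h^{\otimes 2}$ for any $h>0$, so the ratio $\bbE Z_N^2/(\bbE Z_N)^2$ is not bounded uniformly in $N$, and one has to check that it stays bounded at precisely the scale the argument requires. The scheme can be made to close along these lines, but the cited references take somewhat different routes: \cite{cf:T_cmp} bypasses the issue entirely via a replica-coupling interpolation that bounds $\tf(0,h)-\tf(\gb,h)$ directly in terms of the overlap, while \cite{cf:Ken} uses a sharper finite-volume comparison than the crude good/bad split you describe.
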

The results for $\ga\le 1/2$, together with the critical point upper
bounds for $\ga>1/2$, have been proven in \cite{cf:Ken}, and then in
\cite{cf:T_cmp}; the lower bounds on the critical point for $\ga>1/2$
have been proven in \cite{cf:DGLT} 
(the result in \cite{cf:DGLT} is slightly weaker than what we state
here and the case $\ga=1$ was not treated) and then in \cite{cf:AZ}
(with the full result cited here). 

The case $\alpha=0$ has also been
considered, but in that case \eqref{eq:K} has to be replaced by 
$K(n)=L(n)/n$, with
$L(\cdot)$ a function varying slowly at infinity and such that
$\sum_{n\in \N}K(n)=1$. For instance, this corresponds to the case
where $\tau$ is the set of returns to the origin of a symmetric
random walk on $\Z^2$. In this case, it has been shown in \cite{cf:AZ2} that
quenched and annealed critical points coincide for every value of
$\beta\ge0$.

\begin{remark} \rm
Let us recall also that it is proven in \cite{cf:GT_cmp} that, for every $\ga>0$, we have 
\begin{equation}
  \label{eq:smooth}
\tf(\beta,h)\, \le \, \frac{1+\ga}{2\gb^2}\,(h-h_c(\gb))^2, 
\end{equation}
for all $\gb>0,h>h_c(\gb)$: this means that when $\ga>1/2$
disorder is relevant also in the sense that it changes the free-energy critical exponent ({\sl cf.} \eqref{eq:annF}).
The analogous result for the hierarchical model, with $(1+\ga)$ replaced by some constant $c(B)$ in \eqref{eq:smooth},
is proven in \cite{cf:LT}.
\end{remark}

\bigskip

In the present work we prove the following: 
\begin{theorem}
\label{th:main2}
  Assume that \eqref{eq:K} holds with $\alpha=1/2$. 
  For every $\gb_0>0$
  there exists a 
constant $0<c_7:=c_7(\gb_0)<\infty$  such that for $\beta\le \beta_0$
\begin{equation}
  h_c(\beta)\, \ge\,  e^{-c_7/\beta^4}.
\end{equation}
\end{theorem}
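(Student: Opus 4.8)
The plan is to adapt the fractional moment--change of measure method to the marginal case $\alpha=1/2$, where the key novelty is replacing the i.i.d.\ tilt of the disorder used for $\alpha>1/2$ by a long-range \emph{anti-correlated} Gaussian change of measure, localized via coarse-graining. Fix $\beta\le\beta_0$ and set $h=h_\beta:=e^{-c_7/\beta^4}$; the goal is to show $\tf(\beta,h_\beta)=0$, for which (by super-additivity / the standard criterion in \cite{cf:GLT,cf:DGLT}) it suffices to exhibit some finite system size $\ell$ with $\bbE[(Z_{\ell,\go})^{\gamma}]$ small, where $\gamma\in(0,1)$ is a fixed fractional power (e.g.\ $\gamma=6/7$). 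First I would record the annealed two-point function estimate: for the $\alpha=1/2$ renewal, $\bE[\delta_m\delta_n]$ and the renewal mass function behave like $(n-m)^{-1/2}$ up to constants, so the relevant ``replica'' sums are only logarithmically divergent --- this logarithmic divergence is exactly the fingerprint of marginality and dictates the $\exp(-c/\beta^4)$ scaling rather than $\exp(-c/\beta^2)$.

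Next I would set up the coarse-graining. Partition $\{1,\dots,N\}$ into blocks $B_1,\dots,B_{N/\ell}$ of length $\ell$, and decompose $Z_{N,\go}$ over the set $\mathcal{I}$ of blocks actually visited by $\tau$, writing $Z_{N,\go}=\sum_{I} Z^{(I)}_{N,\go}$ where $Z^{(I)}$ is the contribution of trajectories touching exactly the blocks indexed by $I$. Using $(\sum a_i)^\gamma\le\sum a_i^\gamma$, it is enough to control $\sum_I \bbE[(Z^{(I)}_{N,\go})^\gamma]$, and a standard combinatorial argument (the number of blocks that can be visited is tied to the $\alpha=1/2$ renewal, giving a geometric-type bound on $|I|$) reduces everything to a per-block factor of the form $\sup \bbE[(Z_{\text{block}})^\gamma]\le \rho$ with $\rho$ small enough to beat the entropy of choosing $I$. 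For each block I would apply a change of measure $\frac{d\tilde{\bbP}}{d\bbP}=g_\ell(\go)$ where $g_\ell$ is a Gaussian density tilt encoding \emph{negative} correlations $\bbE_{\tilde{\bbP}}[\go_m\go_n]\approx -\frac{\lambda}{\sqrt{\ell}\,}\,$(suitably normalized) over the block, chosen to mirror the annealed two-point kernel; then by Hölder, $\bbE[(Z^{(I)})^\gamma]\le \bbE[g_\ell^{-\gamma/(1-\gamma)}]^{1-\gamma}\,\bbE_{\tilde{\bbP}}[(Z^{(I)})]^\gamma\cdots$ more precisely one bounds $\bbE_{\tilde{\bbP}}[Z^{(I)}]$ (a first moment under the tilted law) and pays an explicit cost $\bbE[g_\ell^{q}]$ for the change of measure, with the tilt strength $\lambda$ and the block size $\ell$ to be optimized at the end.

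The heart of the estimate is then computing $\bbE_{\tilde{\bbP}}[Z_{\text{block}}]$ under the anti-correlated law: the linear tilt shifts each $\go_n$ inside the block, and the induced effective pinning potential felt by the renewal gets \emph{lowered} by a term coming from the quadratic form $\sum_{m,n\in\text{block}} (\text{correlation kernel})\,\delta_m\delta_n$, whose annealed average is $\asymp \lambda^2 \log\ell/\ell^{?}$ after the normalization --- the logarithmic factor is what lets a tiny $\lambda$ (of order a constant, with the block size $\ell$ taken of order $\exp(c/\beta^4)$) produce a macroscopic gain. Balancing: the change-of-measure cost is $\exp(C\lambda^2)$ (dimension-free in $\ell$ thanks to the normalization), the first moment under $\tilde{\bbP}$ is $\lesssim \exp(-c\lambda^2 \log\ell \,+\, \text{(contribution of }h\ell\text{)})$, so choosing $\ell\sim e^{c_7/\beta^4}$ and $h\le e^{-c_7/\beta^4}$ makes the $h\ell$ term $O(1)$ while $\lambda^2\log\ell \sim \beta^{-4}\cdot\beta^4\cdot(\text{large})$ wins. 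I expect the main obstacle to be precisely this optimization in the non-hierarchical case --- making the coarse-graining compatible with the long-range correlated tilt (the tilt must live on ``disorder pockets'' aligned with visited blocks, not on all of $\{1,\dots,N\}$, otherwise the change-of-measure cost blows up), and carefully tracking how the $\alpha=1/2$ renewal combinatorics interacts with the block decomposition so that the number of choices of $I$ is controlled by a convergent (not logarithmically divergent) series after the fractional power is applied. In the hierarchical case (Theorem~\ref{th:main}) the same scheme runs more cleanly: the recursion \eqref{eq:Rn+1} replaces the coarse-graining, one iterates a contraction estimate $\bbE[(X_{n+1})^\gamma]\le f(\bbE[(X_n)^\gamma])$ down from an initial scale where the correlated tilt has already been applied, and the marginal $B=B_c=\sqrt2$ makes the relevant sum over hierarchical scales logarithmically divergent, again forcing the $\beta^{-4}$ exponent.
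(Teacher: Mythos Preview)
Your overall architecture matches the paper's: fractional moment, coarse-graining into blocks, a Gaussian change of measure with long-range negative correlations restricted to the visited blocks, and a final reduction to a homogeneous delocalized pinning model on the block scale. The gap is in your quantitative balancing, which as written does not produce the exponent $\beta^{-4}$ and misses the mechanism that does.

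Under the tilted law with covariance $I-H$, the Gaussian computation gives
\[
\tilde\bbE\Big[e^{\beta\sum_i\omega_i\delta_i-(\beta^2/2)\sum_i\delta_i}\Big]\,=\,e^{-(\beta^2/2)(H\delta,\delta)},
\]
so the gain in the exponent carries a prefactor $\beta^2$, not the tilt amplitude $\lambda^2$; the amplitude $\lambda=\|H\|$ must stay $O(1)$ so that the per-block change-of-measure cost is bounded, and it enters the gain only \emph{linearly}. More importantly, the correlation kernel cannot be taken constant $\sim\lambda/\sqrt\ell$ over the block: the paper takes $H_{ij}\propto 1/\sqrt{k\,(\log k)\,|i-j|}$, and with this HS-normalized choice the quadratic form $(H\delta,\delta)$ over a block of size $\sim k$ is of order $\sqrt{\log k}$, not $\log k$. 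This is the content of the key technical step you do not mention, Lemma~\ref{th:CE} (the convergence in law of $(\sqrt L\log L)^{-1}\sum_{1\le i<j\le L}\delta_i\delta_j/\sqrt{j-i}$ to a multiple of $|Z|$), which combined with a Paley--Zygmund-type argument makes the per-block first moment small as soon as $\beta^2\sqrt{\log k}$ exceeds a fixed large constant, i.e.\ $\log k\gtrsim 1/\beta^4$. Your claimed gain $\exp(-c\lambda^2\log\ell)$ contains neither the $\beta^2$ prefactor nor the square root, and the line ``$\lambda^2\log\ell\sim\beta^{-4}\cdot\beta^4\cdot(\text{large})$'' does not parse: with $\lambda$ constant it would say the gain is $\sim 1/\beta^4$, which would in fact suggest a \emph{bounded} block size suffices for any $\beta$. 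The genuine technical heart of the proof is precisely this $\sqrt{\log}$ growth of the normalized quadratic form under the $\alpha=1/2$ renewal (Lemma~\ref{th:CE} here, Lemma~\ref{th:secmom} in the hierarchical case), and without it the balancing cannot be carried out.
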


\section{Marginal relevance of disorder: the hierarchical case}\label{sec:Hproof}
\subsection{Preliminaries: a Galton-Watson representation
 for $X_n$}
\label{sec:trees}

One can give an expression for $X_n$ which is analogous to that of the
partition function \eqref{eq:Znh} of the non-hierarchical model, and
which is more practical for our purposes. This involves a Galton-Watson tree 
\cite{cf:T-Harris} describing the successive offsprings  of one individual.
The offspring distribution concentrates on $0$ (with probability $(B-1)/B$) and on $2$ (with probability $1/B$). So, at a given 
 generation,
each individual  that is  present has either no descendant or two
descendants, and this independently of any other individual of the
generation. This branching procedure directly maps to a random
tree (see Figure~\ref{fig:treemarg}): the law of such a branching process up
to generation $n$ (the first individual is at generation
$0$) or, analogously, the law of the random tree 
 from the root (level $n$) up to the leaves (level $0$), is denoted by $\bP_n$.
The individuals that are present at the $n^{\textrm{th}}$ generation
are a random subset ${\mathcal R}_n $ of 
$\{ 1, \ldots, 2^n\}$. We set $\gd_j:= \ind _{j \in {\mathcal R}_n}$.
Note that the mean offspring size is $2/B>1$, so that the Galton-Watson process
is supercritical. 

The following procedure on the standard binary graph $\cT^{(n)}$ of
depth $n+1$ (again, the root is at level $n$ and the leaves, numbered
from $1$ to $2^n$, at level $0$) is going to be of help too.  Given
$\mathcal I\subset \{1,\ldots,2^n\}$, let $\mathcal T^{(n)}_{\mathcal
  I}$ be the subtree obtained from $\mathcal T^{(n)}$ by deleting all
edges except those which lead from leaves $j\in\mathcal I$ to the
root. Note that, with the offspring distribution we consider, in general
$\mathcal T^{(n)}_{\mathcal I}$ is not a realization of 
the $n$-generation Galton-Watson tree (some individuals may have
just one descendant in $\mathcal T^{(n)}_{\mathcal I}$, 
see Figure~\ref{fig:treemarg}).

 Let $v(n,\mathcal I)$ be the number of nodes
in $\mathcal T^{(n)}_{\mathcal I}$, with the convention that leaves are not counted as nodes, while the root is.

\begin{figure}[ht]
\begin{center}
\leavevmode
\epsfxsize =10.7 cm
\psfragscanon
\psfrag{i}{$4$}
\psfrag{j}{$6$}
\psfrag{k}{$13$}
\psfrag{0}[r]{\small \sl level $0$}
\psfrag{L}[r]{\small \sl (the leaves)}
\psfrag{R}[r]{\small \sl (the root)}
\psfrag{1}[r]{\small  \sl level $1$}
\psfrag{2}[r]{\small  \sl level $2$}
\psfrag{r}[r]{\small  \sl level $4$}
\psfrag{3}[r]{$\ldots\ \ $ }
\epsfbox{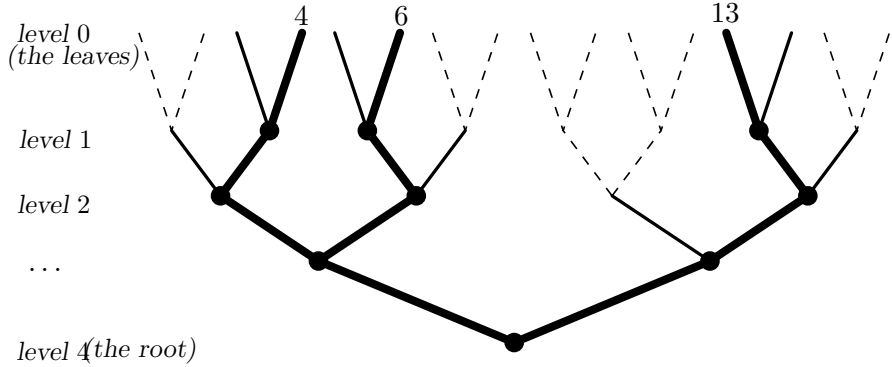}
\end{center}
\caption{
The thick solid lines in the figure form the
tree $\mathcal T^{(4)}_{\{4,6,13\}}$, which is a subtree of the
 binary tree $\mathcal T^{(n)}$ ($n=4$). Note that
$\mathcal T^{(4)}_{\{4,6,13\}}$ is {\sl not} a possible 
realization of the Galton-Watson tree, while it becomes so if we complete it
by adding the thin solid lines.
At level $0$ there
are the leaves; the nodes of $\mathcal T^{(4)}_{\{4,6,13\}}$
are marked by dots. $\mathcal T^{(4)}_{\{4,6,13\}}$ contains
$v(4,\{4,6,13\})=9$ nodes. In terms of Galton-Watson offsprings,
for the ({\sl completed}) trajectory above
$\mathcal R _4=\{3,4,5,6,13,14\}$. Moreover,
computing the average in  
\eqref{eq:ffv1} means computing the probability that the 
realization of 
the Galton-Watson 
tree  contains $\mathcal T^{(n)}_{\mathcal I}$
as a subset: but this simply means  requiring that 
the individuals at the nodes of $\mathcal T^{(n)}_{\mathcal I}$
have two children and the expression \eqref{eq:ffv1} becomes clear.}
\label{fig:treemarg}
\end{figure}

\medskip

\begin{proposition}
\label{th:deltas}
  For every $n\ge0$ we have
\begin{equation}
  \label{eq:Rn}
  X_n\, = \,\bE_n\left[
  e^{\sum_{i=1}^{2^n}(\beta \go_i+h-\beta^2/2)\delta_i}\right].
\end{equation}
For every $n\ge 0$ and $\mathcal I\subset \{1,\ldots,2^n\}$, one has
\begin{equation}
\label{eq:ffv1}
  \bE_n\bigg[\prod_{i\in\mathcal I}\delta_i\bigg]\, =\, B^{-v(n,\mathcal I)}.
\end{equation}
In particular, $\bE_n[\delta_i]=B^{-n}$ for every $i=1,\ldots,2^n$,
{\sl i.e.}, the Green function is constant throughout the system. 
\end{proposition}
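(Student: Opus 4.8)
The plan is to prove \eqref{eq:Rn} and \eqref{eq:ffv1} by induction on $n$, using that the recursion \eqref{eq:Rn+1} is nothing but a transcription of the recursive structure of the Galton--Watson tree: the root of a depth-$(n+1)$ tree has, with probability $(B-1)/B$, no offspring, and, with probability $1/B$, two offspring, in which case each child is the root of an independent depth-$n$ tree — the first one carrying the leaves $\{1,\dots,2^n\}$ together with the disorder variables $\go_1,\dots,\go_{2^n}$, the second one carrying the leaves $\{2^n+1,\dots,2^{n+1}\}$ together with $\go_{2^n+1},\dots,\go_{2^{n+1}}$.

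For \eqref{eq:Rn}, the case $n=0$ is immediate since $\mathcal R_0=\{1\}$ deterministically, so the right-hand side is $e^{\beta\go_1+h-\beta^2/2}=R_0^{(1)}=X_0$. For the inductive step I would condition on the number of offspring of the root of the depth-$(n+1)$ tree. If it is $0$ then $\mathcal R_{n+1}=\emptyset$ and the exponential equals $1$; if it is $2$ then $\sum_{i=1}^{2^{n+1}}(\beta\go_i+h-\beta^2/2)\delta_i$ splits as a sum of two independent contributions, one a function of $(\go_1,\dots,\go_{2^n})$ and the left subtree only, the other a function of $(\go_{2^n+1},\dots,\go_{2^{n+1}})$ and the right subtree only, whose conditional expectations under $\bE_{n+1}$ are $R_n^{(1)}$ and $R_n^{(2)}$ by the induction hypothesis. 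Weighting these two cases by $(B-1)/B$ and $1/B$ respectively gives $\bE_{n+1}[\,\cdot\,]=\big((B-1)+R_n^{(1)}R_n^{(2)}\big)/B=R_{n+1}^{(1)}=X_{n+1}$, which is \eqref{eq:Rn+1}.

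For \eqref{eq:ffv1} I would argue directly (an induction of exactly the same shape also works: for $\mathcal I\neq\emptyset$ the root must branch, contributing $1/B$, after which $\mathcal I$ distributes over the two independent subtrees and $v(n+1,\mathcal I)=1+v(n,\mathcal I_1)+v(n,\mathcal I_2)$). The product $\prod_{i\in\mathcal I}\delta_i$ equals $1$ precisely when every leaf $i\in\mathcal I$ belongs to $\mathcal R_n$, and a leaf belongs to $\mathcal R_n$ if and only if each of its strict ancestors in $\cT^{(n)}$ produced two offspring. Hence $\prod_{i\in\mathcal I}\delta_i=1$ if and only if all the individuals sitting at the nodes of $\mathcal T^{(n)}_{\mathcal I}$ — that is, at the non-leaf vertices of $\cT^{(n)}$ lying on some path from a leaf of $\mathcal I$ to the root, which are exactly the $v(n,\mathcal I)$ vertices in the statement — branched into two. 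Since distinct individuals branch independently, each with probability $1/B$, this event has probability $B^{-v(n,\mathcal I)}$; the cases $\mathcal I=\emptyset$ (empty product, $v(n,\emptyset)=0$) and $n=0$ are degenerate and trivial. Finally, specializing to $\mathcal I=\{i\}$, the tree $\mathcal T^{(n)}_{\{i\}}$ is the path from the leaf $i$ to the root, which has exactly $n$ non-leaf nodes, so $\bE_n[\delta_i]=B^{-n}$.

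There is no real analytic difficulty here; this proposition is just the reformulation of the algebraic recursion \eqref{eq:Rn+1} in the language of the tree. The only point demanding care is the bookkeeping in the inductive step: verifying that the two subtrees of the depth-$(n+1)$ tree are genuinely independent and are built on disjoint blocks of the disorder sequence, which is what makes the conditional expectation of the product of the two exponential factors factorize into $R_n^{(1)}R_n^{(2)}$.
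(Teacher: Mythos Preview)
Your proof is correct and follows essentially the same approach as the paper's own proof: induction on $n$ for \eqref{eq:Rn} via the branching dichotomy at the root, and a direct reading of the Galton--Watson structure for \eqref{eq:ffv1}. The paper is considerably more terse (it dispatches \eqref{eq:ffv1} with a one-line reference to the figure caption), while you have spelled out the independence and bookkeeping explicitly, but there is no substantive difference in method.
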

\medskip



\begin{proof}[Proof of Proposition \ref{th:deltas}]
The right-hand side in  \eqref{eq:Rn} 
for $n=0$ is equal to $\exp(h-\gb^2/2+ \gb \go_1)$.
Moreover, at the $(n+1)^{\textrm{th}}$ generation the branching process
either contains only the initial individual (with probability $(B-1)/B$)
or the initial individual  has two children, which we may look at as 
initial  individuals
of two independent 
Galton-Watson trees containing $n$ new generations.
We therefore have  that the basic recursion  \eqref{eq:Rn+1}
is satisfied. 

The second fact, \eqref{eq:ffv1}, is a direct consequence of the definitions
(see also the caption of Figure~\ref{fig:treemarg}).
\end{proof}

\begin{remark} \rm
The representation we have introduced in this section shows in particular
that $\bbE X_n$ is just the generating function of $\vert \mathcal R_n\vert$
and the free energy $\tf(0, h)$ is therefore a natural quantity
for the 
Galton-Watson process: and in fact  $1/\ga$ ($\ga$ given in 
\eqref{eq:alpha}) appears in the original works on branching processes by
 T.~E.~Harris
(of course not to be confused with  A.~B.~Harris, who proposed the 
disorder relevance criterion on which we are focusing in this work).
\end{remark}

\subsection{The proof of Theorem~\ref{th:main}}
While the discussion of the previous section is valid for every
$B\in(1,2)$,  
now we have to assume $B=B_c=\sqrt 2$. However some of the steps are still
valid in general and we are going to replace $B$ with $B_c$ only 
when it is really needed.  
 The
proof is split into three subsections: the first  introduces
the fractional moment method and reduces the statement we want to
prove, which is a statement on the limit $n \to \infty$ behavior of
$X_n$, to finite-$n$ estimates. The estimates are provided in the
second and third subsection.

\subsubsection{The fractional moment method}
Let $U_n^{(i)}$ denote the quantity $[R_n^{(i)}-(B-1)]_+$ where
$[x]_+=\max(x,0)$. 
Using the inequality
\begin{equation}
  [rs+r+s]_+\le [r]_+[s]_++[r]_++[s]_+,
\end{equation}
which holds whenever $r,s\ge-1$,
it is easy to check that \eqref{eq:Rn+1} implies

\begin{equation}
\label{eq:recU}
 U^{(i)}_{n+1}\le \frac{U_n^{(2i-1)}U_n^{(2i)}+(U_n^{(2i-1)}+U_n^{(2i)})(B-1)}{B}.
\end{equation}
Given $0<\gamma<1$, we define $A_n:=\bbE([X_n-(B-1)]_+^\gamma)$. From \eqref{eq:recU} above and by using the
fractional inequality 
\begin{equation}
  \label{eq:fracineq}
  \left(\sum a_i\right)^{\gamma}\, \le\,  \sum a_i^{\gamma},
\end{equation}
which holds whenever $a_i\ge0$, we derive
\begin{equation}
 A_{n+1}\le \frac{A_n^2+2(B-1)^{\gamma}A_n}{B^{\gamma}}.
\end{equation}
One readily sees now 
 that,  if there exists some integer $k$ such that 
\begin{equation}
\label{eq:k}
A_k\, < \, B^{\gamma}-2(B-1)^{\gamma},
\end{equation}
then $A_n$ tends to zero as $n$ tends to infinity (this statement is
easily obtain by studying the fixed points of the function $x\mapsto
{(x^2+2(B-1)^{\gamma}x)}/{B^{\gamma}}$). 
On the other hand,
\begin{equation}
\label{eq:LB0411}
\bbE\left[ X_n ^\gamma \right] \, \le \, \bbE\left([X_n-(B-1)]_++(B-1)\right)^\gamma\le 
(B-1)^\gamma + A_n,
\end{equation}
and therefore \eqref{eq:k}  implies that $\tf (\gb, h)=0$
since, by Jensen inequality, we have
\begin{equation}
\label{eq:JensAn}
\frac 1{2^n} \bbE \log X_n  \, \le \, \frac 1{2^n\gamma} \log  \bbE\left[ X_n ^\gamma \right].
\end{equation}
Note that, to establish $\tf (\gb, h)=0$, 
it suffices to prove that 
  $\limsup_n 2^{-n}\log A_n \le 0$,
  hence our approach yields a substantially stronger piece of
  information, {\sl i.e.} that the fractional moment $A_n$ does go to
  zero.

\medskip 

In order to find a $k$ such that \eqref{eq:k} holds we introduce a new
probability measure $\tilde \bbP$ (which is going to depend on $k$) such
that $\tilde \bbP$ and $\bbP$ are equivalent, that is mutually
absolutely continuous.  By H\"older's inequality 
applied for $p=1/\gamma$ and $q=1/(1-\gamma)$ we have
\begin{equation}
\begin{split}
\label{eq:hold}
  A_k\, =\, \tilde \bbE\left[ \frac{\dd \bbP}{\dd \tilde
      \bbP}\;[X_k-(B-1)]_+^{\gamma}\right] \le
  \left(\bbE\left[\left(\frac{\dd \bbP}{\dd \tilde
          \bbP}\right)^{\frac{\gamma}{1-\gamma}}\right]\right)^{1-\gamma}
  \left(\tilde\bbE\left[ [X_k-(B-1)]_+\right]\right)^{\gamma},
\end{split}
\end{equation}
and a sufficient condition for \eqref{eq:k} is  therefore that
\begin{equation}
\label{eq:cond1}
 \tilde\bbE\left[ [X_k-(B-1)]_+\right]\, \le\,  \left(\bbE\left[\left(\frac{\dd \bbP}{\dd \tilde \bbP}\right)^{\frac{\gamma}{1-\gamma}}\right]\right)^{1-\frac{1}{\gamma}}\left(B^{\gamma}-2(B-1)^{\gamma}\right)^{\frac{1}{\gamma}}.
\end{equation}

Let $x^{(0)}_n$ be obtained applying $n$ times the annealed iteration
$x\mapsto (x^2+(B-1))/B$ to the initial condition $x^{(0)}_0=0$. One
has that $x_n^{(0)}$ approaches monotonically the stable fixed point
$(B-1)$. Since the coefficients in the iteration \eqref{eq:Rn+1} are
positive, one has for every $h,\gb,\go$ that $X_n
\ge x_n^{(0)} \stackrel{n\to\infty}\nearrow B-1$ (this is a deterministic bound) and therefore, for any given $\zeta>0$,
one can find an integer $n_\zeta$ such that if $n\ge n_\zeta$ we have
\begin{equation}
\tilde\bbE\left[ [X_n-(B-1)]_+\right]\, \le \, \tilde\bbE\left[ X_n-(B-1)\right]+\frac \zeta 4.
\end{equation}
Moreover, since
$\left(B^{\gamma}-2(B-1)^{\gamma}\right)^{\frac{1}{\gamma}}- (2-B) \stackrel{\gamma
\nearrow 1}\sim 
-c_B (1-\gamma)$ for some $c_B>0$, one can find  
$\gamma=\gamma_\zeta$ such that $\left(B^{\gamma}-2(B-1)^{\gamma}\right)^{\frac{1}{\gamma}}\ge 2-B-\zeta/4$.
At this point,
if $\gamma=\gamma_\zeta$, $k\ge n_\zeta$ and if $\tilde\bbP$ is such that
\begin{equation}
\left(\bbE\left[\left(\frac{\dd \bbP}{\dd \tilde \bbP}\right)^{\frac{\gamma}{1-\gamma}}\right]\right)^{1-\frac{1}{\gamma}}\ge 1-\frac {\zeta}4.
 \end{equation}
 (recall that $\tilde \bbP$ depends on $k$)
and $\tilde\bbE [X_k]\le 1-\zeta$ then \eqref{eq:cond1} is satisfied
and $\tf(\gb, h)=0$.

We sum up what we have obtained:
\medskip

\begin{lemma}\label{th:moment}Let $\zeta>0$ and
choose  $\gamma(=\gamma_\zeta)$ and   $n_\zeta$
as above. If there exists $k\ge n_\zeta$ and 
a probability measure $\tilde \bbP$ (such that $\bbP$
and $\tilde \bbP$ are equivalent probabilities) such that
\begin{equation}\label{eq:close}
\left(\bbE\left[\left(\frac{\dd \bbP}{\dd \tilde \bbP}\right)^{\frac{\gamma}{1-\gamma}}\right]\right)^{1-\frac{1}{\gamma}}\, \ge \, 1-\frac{\zeta}4 , 
 \end{equation}
and 
\begin{equation}
 \tilde \bbE [X_k]\le 1-\zeta,
\end{equation}
then the free energy is equal to zero.
\end{lemma}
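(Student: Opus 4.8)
The plan is to assemble the pieces prepared in the discussion preceding the statement: Lemma~\ref{th:moment} is essentially a repackaging, and the only substantive ingredients are the quadratic recursion for $A_n$ and the Hölder splitting, so I would present the proof as a short chain of implications rather than grinding anything new. The key reduction is: $\tf(\beta,h)=0$ follows as soon as one exhibits a single index $k$ and a measure $\tilde\bbP\sim\bbP$ satisfying \eqref{eq:close} and $\tilde\bbE[X_k]\le 1-\zeta$.

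First I would recall the recursion $A_{n+1}\le B^{-\gamma}\big(A_n^2+2(B-1)^\gamma A_n\big)$, obtained by taking $\bbE$ in \eqref{eq:recU}, bounding the $\gamma$-th power of the right-hand side by \eqref{eq:fracineq}, and using that $\{R_n^{(i)}\}_i$ are i.i.d.\ so that $\bbE[(U_n^{(2i-1)}U_n^{(2i)})^\gamma]=A_n^2$. Set $\phi(x):=B^{-\gamma}(x^2+2(B-1)^\gamma x)$; this map is increasing and convex on $\R^+$, fixes $0$, and has a second fixed point $x_\star:=B^\gamma-2(B-1)^\gamma$, which is positive once $\gamma$ is close enough to $1$ since $x_\star\to 2-B>0$ as $\gamma\nearrow1$. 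Because $\phi(x)-x=x(x-x_\star)/B^\gamma<0$ on $(0,x_\star)$, from any $A_k<x_\star$ the iterates $A_n$ decrease and converge, necessarily to $0$. Combined with \eqref{eq:LB0411} this yields $\bbE[X_n^\gamma]\le(B-1)^\gamma+A_n\to(B-1)^\gamma$, so the right-hand side of \eqref{eq:JensAn} tends to $0$; since $\tf(\beta,h)\ge0$ by \cite[Th.~1.1]{cf:GLT}, we get $\tf(\beta,h)=0$. Thus everything reduces to producing one index $k$ with $A_k<x_\star$.

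Next I would bring in the change of measure. Writing $A_k=\tilde\bbE\big[(\dd\bbP/\dd\tilde\bbP)\,[X_k-(B-1)]_+^{\gamma}\big]$ and applying Hölder with exponents $1/(1-\gamma)$ and $1/\gamma$ gives \eqref{eq:hold}, so that \eqref{eq:cond1} is a sufficient condition for $A_k<x_\star$. It then remains to absorb two harmless slacks into the hypotheses. The deterministic bound $X_n\ge x_n^{(0)}\nearrow B-1$ shows that for $k\ge n_\zeta$ one has $\tilde\bbE\big[[X_k-(B-1)]_+\big]\le\tilde\bbE[X_k-(B-1)]+\zeta/4$, so the hypothesis $\tilde\bbE[X_k]\le1-\zeta$ bounds the left side of \eqref{eq:cond1} by $2-B-\tfrac34\zeta$. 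On the right side, the expansion $(B^\gamma-2(B-1)^\gamma)^{1/\gamma}=(2-B)-c_B(1-\gamma)+o(1-\gamma)$ permits the choice $\gamma=\gamma_\zeta$ making $x_\star^{1/\gamma}\ge 2-B-\zeta/4$, while \eqref{eq:close} bounds the Radon--Nikodym factor below by $1-\zeta/4$; hence the right side is at least $(1-\tfrac\zeta4)(2-B-\tfrac\zeta4)\ge 2-B-\tfrac34\zeta$ (using $B>0$). So \eqref{eq:cond1} holds, $A_k<x_\star$, and $\tf(\beta,h)=0$.

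There is no real obstacle inside the proof of the lemma itself — it is precisely a piece of bookkeeping whose purpose is to isolate, once and for all, the two finite-volume requirements \eqref{eq:close} and $\tilde\bbE[X_k]\le1-\zeta$. The genuine difficulty, deferred to the following subsections and the point at which the marginal value $B=B_c$ makes itself felt, is to construct a $\tilde\bbP$ for which $\tilde\bbE[X_k]$ can be pushed below $1$ for some large $k$ while the cost in \eqref{eq:close} remains $o(1)$; as the paper explains, this is what forces one to abandon i.i.d.\ tilting in favour of a long-range anti-correlation structure on the $\go_i$'s tuned to the annealed two-point function.
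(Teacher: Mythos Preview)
Your proposal is correct and follows exactly the paper's approach: the lemma is explicitly introduced in the paper as a summary (``We sum up what we have obtained''), and you have faithfully reconstructed the chain of implications laid out in the preceding paragraphs --- the quadratic recursion for $A_n$, the fixed-point analysis of $x\mapsto B^{-\gamma}(x^2+2(B-1)^\gamma x)$, the H\"older splitting \eqref{eq:hold}, and the absorption of the $\zeta/4$ slacks coming from $n_\zeta$ and $\gamma_\zeta$. The final arithmetic check $(1-\tfrac\zeta4)(2-B-\tfrac\zeta4)\ge 2-B-\tfrac34\zeta$ is correct (the gap is $(\zeta/4)B+\zeta^2/16>0$), and in fact yields strict inequality in \eqref{eq:cond1}, hence $A_k<x_\star$ as required.
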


\subsubsection{The change of measure}

In order to use wisely the result of the previous section, we have to
find a measure $\tilde \bbE:=\tilde \bbE_n$ on the environment which
is, in a sense, close to $\bbE$ ({\sl cf.} \eqref{eq:close}), and that
lowers significantly the expectation of $X_n$. In \cite{cf:GLT} we
introduced the idea of changing the mean of the $\go$-variables,
while keeping their IID character. This strategy was enough to prove
disorder relevance for $B<B_c$, but it is not effective in the
marginal case $B=B_c$ we are considering here. Here, instead, we
choose to introduce {\sl weak, long range} negative correlations
between the different $\go_i$ without changing the laws of the
1-dimensional marginals. As it will be clear, the covariance structure we
choose reflects the hierarchical structure of the model we are
considering.

 In the
sequel we take $h\ge h_c(0)=0$.

We define $\tilde \bbP_n$ by stipulating that the variables $\go_i,i>2^n,$ are still 
IID standard Gaussian independent of $\go_1, \ldots, \go_{2^n}$, while $\go_1, \ldots, \go_{2^n}$
are Gaussian, centered,  and with covariance matrix
\begin{equation}
\label{eq:covar}
  C:=I-\gep V,
\end{equation}
where $I$ is the $2^n\times 2^n$ identity matrix, $\gep>0$ and $V$ is
a  symmetric $2^n\times 2^n$ matrix with zero diagonal terms 
and with positive off-diagonal terms
($\gep$ and $V$
will be specified in a moment). 

The choice $V_{ii}=0$ implies of course $\text{Trace}(V)=0$, and we
are also going to impose that the Hilbert-Schmidt norm of $V$ verifies 
$\Vert V \Vert ^2 := \sum_{i,j } V_{i,j}^2 =\text{Trace}(V^2)=1$. This in particular
implies that $C$ is positive definite (so that $\tilde \bbP_n$
exists!) as soon as $\gep<1$: this is because $\Vert V \Vert$, being a
matrix norm, dominates the spectral radius of $V$.

Now, still without choosing $V$ explicitly,  we compute a lower bound for the left--hand side of \eqref{eq:close}. 
The mutual density of $\tilde \bbP_n$ and $\bbP$  is
\begin{equation}
  \label{eq:densit}
  \frac{\dd \tilde \bbP_n}{\dd \bbP}(\go)\, =
  \, 
  \frac{e^{-1/2 ((C^{-1}-I)\go,\go)}}{\sqrt{\det C}},
\end{equation}
with the notation $(Av,v):=\sum_{1\le i,j\le 2^n}A_{ij}v_i v_j$,
and therefore a straightforward Gaussian computation gives
\begin{equation}
  \label{eq:dets}
  \left( \bbE \left[ \left(\frac{\dd\bbP}{\dd\tilde \bbP_n}\right)^{\gamma/(1-\gamma)}\right]\right)^{1-1/\gamma}\, =\, 
  \frac{(\det[I-(\gep/(1-\gamma)) V])^{(1-\gamma)/(2\gamma)}}{(\det C)^{1/(2\gamma)}}.
\end{equation}
If we want to prove a lower bound of the type \eqref{eq:close}, a
necessary condition is of course that the numerator in \eqref{eq:dets}
is  positive: this is ensured by requiring $\gep < 1-\gamma$. For the
next computation we are going to require also that $ \gep/(1-\gamma)
\le 1/2$: we are going in fact to use that $\log (1+x) \ge x -x^2$ if
$x \ge -1/2$, and $\text{Trace}(V)=0$ to obtain that
\begin{multline}
\label{eq:dano}
\det\left[I-(\gep/(1-\gamma)) V\right] \, =\,
\exp \left( \text{Trace}( \log (I-(\gep/(1-\gamma))V)) \right)\\
 \ge \, 
\exp\left( -\frac{\gep^2}{(1-\gamma)^2} \Vert V \Vert ^2 \right),
\end{multline}
while
$\log (1+x) \le x$ and the traceless character of $V$ directly imply $\det C \le 1$ so that finally
\begin{equation}
  \label{eq:dets-est}
  \left(  \bbE\left[\left(\frac{\dd\bbP}{\dd\tilde \bbP_n}\right)^{\gamma/(1-\gamma)}\right]\right)^{1-1/\gamma}\, \ge\, 
  \exp \left( -\frac{\gep^2}{2\gamma(1-\gamma)} \right).
\end{equation}

\smallskip

Next, we estimate the expected value of $X_n$ under the modified measure: from \eqref{eq:Rn} we see that
\begin{equation}\begin{split}
\label{eq:EZN}
\tilde \bbE_n X_n&\, =\, \bE_n\left[e^{(h-(\beta^2/2))\sum_{i=1}^{2^n}
  \delta_i}\,\tilde\bbE_n e^{\sum_{i=1}^{2^n}
  \gb\go_i\delta_i}\right]\\
&\, =\, \bE_n\left[e^{-\gep(\beta^2/2) (V\delta,\delta)+\sum_{i=1}^{2^n}
  h\delta_i}\right]
  \, \le\, e^{2^n h}\,\bE_n\left[e^{-\gep(\beta^2/2)
  (V\delta,\delta)}\right].
\end{split}
\end{equation}

Finally we choose $V$. From \eqref{eq:EZN}, it is not hard to guess that the most convenient choice, subject to the 
constraint $\Vert V\Vert^2=1$, is
\begin{equation}
  \label{eq:V}
  V_{ij}\, =\, \bE_n[\delta_i\delta_j]\bigg /\sqrt{\sum_{1\le i\ne j\le 2^n}(\bE_n[\delta_i\delta_j])^2},
\end{equation}
for $i\ne j$, while we recall that $V_{ii}=0$. 
The normalization in \eqref{eq:V} can be computed with the help of Proposition 
\ref{th:deltas}:
\begin{equation}
\label{eq:2n}
  \sum_{1\le i\ne j\le 2^n}\left(\bE_n[\delta_i\delta_j]\right)^2=
2^n\sum_{1<j\le 2^n}\left(\bE_n[\delta_1\delta_j]\right)^2=
2^n\sum_{1\le a\le n}\frac{2^{a-1}}{B_c^{2(n+(a-1))}}=n.
\end{equation}
In the second equality, we used the fact that there are $2^{a-1}$
values of $1<j\le 2^n$ such that the two branches of the tree
$\mathcal T^{(n)}_{\{1,j\}}$ join at level $a$ ({\sl cf.} the notations of
Section \ref{sec:trees}), and such tree contains $n+a-1$ nodes.

As a side remark, note that if $B_c<B<2$ (irrelevant disorder regime)
the left-hand side of \eqref{eq:2n} instead goes to zero with $n$,
while for $1<B<B_c$ (relevant disorder regime) it diverges
exponentially with $n$.

So, in the end,  our choice for $V$ is: 
\begin{equation}
  \label{eq:VV}
  V_{ij}=\left\{
    \begin{array}{lll}
      {\bE_n[\delta_i\delta_j]}/{\sqrt{n}} &\mbox{if}& i\ne j\\
0&\mbox{if}& i=j.
    \end{array}
\right.
\end{equation}

\subsubsection{Checking the conditions of Lemma \ref{th:moment}}

To conclude the proof of Theorem \ref{th:main} we have to show that if
$\beta\le \beta_0$ and $h\le \exp(-c_3/\beta^4)$ (and provided that
$c_3= c_3(\beta_0)$ is chosen large enough) the conditions of Lemma
\ref{th:moment} are satisfied. The main point is therefore to
estimate the expectation of $X_n$ under $\tilde\bbP_n$.

Recalling that ({\sl cf.} \eqref{eq:EZN})
\begin{equation}
\label{eq:q2}
  \tilde \bbE_n X_n\le \bE_n \left[e^{-(\beta^2/2)\gep\sum_{1\le i\ne j\le 2^n}\delta_i\delta_j
    \frac{\bE_n[\delta_i\delta_j]}{\sqrt{n}}}\right]e^{2^n h}
,
\end{equation}
we define
\begin{equation}
Y_n:=\sum_{1\le i\ne j\le 2^n}\delta_i\delta_j
    \frac{\bE_n[\delta_i\delta_j]}{n}.
\end{equation}
Thanks
to \eqref{eq:2n}, we know that $\bE_n (Y_n)=1$,
so that the Paley-Zygmund inequality gives
\begin{equation}
  \label{eq:PZ}
  \bP_n \left(Y_n\ge 1/2\right) \,=\, 
  \bP_n \left(Y_n\ge (1/2)\bE_n(Y_n)\right)\, \ge 
  \, \frac{(\bE_n(Y_n))^2}{4\,\bE_n(Y_n^2)}=
  \frac{1}{4\,\bE_n(Y_n^2)}.
\end{equation}
We need therefore the following estimate, which will be proved at the end of the section:
\begin{lemma} \label{th:secmom}
We have: 
\begin{equation}
(1\le )\;\mathcal K:=\sup_{n}  \bE_n[Y_n^2]<\infty.
\end{equation}
\end{lemma}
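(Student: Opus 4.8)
The plan is to bound $\bE_n[Y_n^2]$ by expanding the square and using the moments of $\delta$'s computed in Proposition~\ref{th:deltas}. Write
\[
Y_n^2\, =\, \frac1{n^2}\sum_{\substack{1\le i\ne j\le 2^n\\ 1\le k\ne \ell\le 2^n}}
\delta_i\delta_j\delta_k\delta_\ell\, \bE_n[\delta_i\delta_j]\,\bE_n[\delta_k\delta_\ell],
\]
so that $\bE_n[Y_n^2]$ is a sum over quadruples $(i,j,k,\ell)$ of the quantity $\bE_n[\delta_i\delta_j\delta_k\delta_\ell]\,\bE_n[\delta_i\delta_j]\,\bE_n[\delta_k\delta_\ell]$, divided by $n^2$. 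By \eqref{eq:ffv1}, each factor $\bE_n[\prod \delta]$ is $B_c^{-v}$ where $v$ is the number of nodes of the corresponding subtree, so everything reduces to a combinatorial count of how the joining levels of the pairs $\{i,j\}$, $\{k,\ell\}$ and of the full quadruple $\{i,j,k,\ell\}$ can be configured in the binary tree $\cT^{(n)}$.

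The key step is to organize the sum according to the ``merge levels'' of the indices. I would fix the level $a$ at which $i$ and $j$ first meet and the level $b$ at which $k$ and $\ell$ first meet, and then split into cases according to how the two pairs sit relative to each other: either the subtree spanned by $\{i,j\}$ and the subtree spanned by $\{k,\ell\}$ are disjoint (so the four leaves merge into a common ancestor at some higher level $c\ge \max(a,b)$), or one pair's span contains the other, or they interleave. Using \eqref{eq:ffv1} and the counting already done in \eqref{eq:2n} (there are $2^{a-1}$ choices of $j$ with merge level $a$, once $i$ is fixed, and $\cT^{(n)}_{\{1,j\}}$ has $n+a-1$ nodes), each case produces a geometric-type sum in the level variables. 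Because $B_c^2=2$, the crucial cancellations are of the form $2^{a}/B_c^{2a}=1$; the diagonal-type term where $\{k,\ell\}=\{i,j\}$ reproduces exactly $\bE_n(Y_n^2)\supseteq$ a term equal to $\bE_n[(\text{something})]$ that is $O(1)$ after dividing by $n^2$ — indeed that term gives back a contribution of order $\sum_{a}(\cdots)=O(n)$ over $n^2$, hence $o(1)$, wait: more carefully, the leading contributions come from configurations where the counting of nodes and the $B_c$-powers balance, and one must check each class contributes $O(n^2)$ so that after the $1/n^2$ normalization the total is $O(1)$ uniformly in $n$.

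The main obstacle I expect is precisely this bookkeeping: one must show that \emph{every} configuration class of quadruples contributes at most $O(n^2)$ and never something like $O(n^2\cdot 2^{\epsilon n})$ or $O(n^3)$. The dangerous configurations are those where three or four of the indices share a long common ancestral path, since then the subtree $\cT^{(n)}_{\{i,j,k,\ell\}}$ has close to $n$ nodes shared among several of the factors, potentially creating extra powers of $2^{-n}$ that fail to be compensated — I would check that in all such cases the entropy (number of such quadruples) is correspondingly reduced so the product stays bounded. Concretely, I would bound $\bE_n[\delta_i\delta_j\delta_k\delta_\ell]\le \bE_n[\delta_i\delta_j]$ (monotonicity in the index set, which is clear since adding leaves can only add nodes) and also $\le \bE_n[\delta_k\delta_\ell]$, hence $\bE_n[\delta_i\delta_j\delta_k\delta_\ell]\le \sqrt{\bE_n[\delta_i\delta_j]\,\bE_n[\delta_k\delta_\ell]}$; then
\[
\bE_n[Y_n^2]\,\le\, \frac1{n^2}\sum_{i\ne j}\sum_{k\ne\ell}\big(\bE_n[\delta_i\delta_j]\big)^{3/2}\big(\bE_n[\delta_k\delta_\ell]\big)^{3/2},
\]
which unfortunately is not summable, so a cruder bound is not enough and the honest case analysis by merge levels is unavoidable. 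Once the finitely many classes are each shown to be $O(n^2)$, summing them and dividing by $n^2$ gives the uniform bound $\cK<\infty$, and the lower bound $\cK\ge1$ is immediate from $\bE_n(Y_n)=1$ and Jensen (or Cauchy--Schwarz), completing the proof.
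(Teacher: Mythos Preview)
Your plan is the same as the paper's: expand the square, use \eqref{eq:ffv1}, and reduce to a case analysis over merge levels in the binary tree. The lower bound $\mathcal K\ge 1$ and the observation that crude bounds like $\bE_n[\delta_i\delta_j\delta_k\delta_\ell]\le\sqrt{\bE_n[\delta_i\delta_j]\bE_n[\delta_k\delta_\ell]}$ fail are both correct. But the proposal stops at the outline stage: the phrase ``once the finitely many classes are each shown to be $O(n^2)$'' is exactly the content of the lemma, and you never carry it out. The middle paragraph (``wait: more carefully\dots'') shows you have not yet pinned down which configurations give the leading $O(1)$ term and which are lower order.

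The paper's organization is cleaner than yours and worth adopting. Rather than fixing the merge level of $\{i,j\}$ and of $\{k,\ell\}$ separately and then worrying about how the two pairs interact, the paper first discards quadruples with a repeated index (these contribute $o(1)$), and for distinct $i,j,k,l$ classifies by the \emph{topology} of $\mathcal T^{(n)}_{\{i,j,k,l\}}$: either both branches of the root bifurcate (type~(a)), or only one does and that one bifurcates again (type~(b)). For each topology one then sums over the bifurcation levels and over the few inequivalent assignments of the labels $(i,j,k,l)$ to the four leaves. This yields a handful of explicit geometric sums; using $B_c=\sqrt2$ one finds the dominant contribution is the type~(a) tree with $\{i,j\}$ on one branch and $\{k,l\}$ on the other, which gives an $O(1)$ term, while all other cases are $O(1/n)$ or $O(1/n^2)$. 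Your decomposition by ``disjoint / nested / interleaving'' pairs is equivalent in principle but messier to execute; in particular the ``interleaving'' case is precisely the crossed assignments in type~(a) and type~(b), and you would need to unpack it to see the $O(1/n^2)$ decay.
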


Together with \eqref{eq:PZ} this implies
\begin{equation}
 \bP_n[Y_n\ge 1/2]\,\ge  \, \frac{1}{4\mathcal K}, 
\end{equation}
so that, for all $n\ge 0$,
\begin{equation}
\label{eq:bd}
\begin{split}
\bE_n \left[e^{-(\beta^2/2)\gep\sum_{1\le i\ne j\le 2^n}\delta_i\delta_j
  \frac{\bE_n[\delta_i\delta_j]}{\sqrt{n}}}\right]\, &=\, 
\bE_n\left[e^{-\frac{\sqrt{n}\gb^2\gep}2 Y_n}\right]\\
&\le \, 1-\frac{1}{4\mathcal K}\left(1-4
  \mathcal K\exp\left(-\frac{\sqrt{n}\gb^2\gep}{4}\right)\right).
  \end{split}
\end{equation}
We fix $\zeta:=1/(40\mathcal K)$
and we choose $\gamma=\gamma_\zeta$ ({\sl cf.} Lemma \ref{th:moment}) and 
$\gep$ in \eqref{eq:covar} small enough so that ({\sl cf.} \eqref{eq:dets-est})
\begin{equation}\label{eq:upb2}
  \left[  \bbE\left(\frac{\dd\bbP}{\dd\tilde \bbP_n}\right)^{\gamma/(1-\gamma)}\right]^{1-1/\gamma}\, \ge\, 
\exp \left( -\frac{\gep^2}{2\gamma(1-\gamma)} \right)\ge 1-\frac{\zeta}{4}.
\end{equation}
Then one can check with the help of \eqref{eq:bd} that for $n\ge {50\mathcal K}/{(\gb^4\gep^2)}$,
\begin{equation}\label{eq:upb3}
\bE_n \left[e^{-(\beta^2/2)\gep\sum_{1\le i\ne j\le 2^n}\delta_i\delta_j
    \frac{\bE_n[\delta_i\delta_j]}{\sqrt{n}}}\right]\, \le \,  1-3\zeta.
\end{equation}
We choose $n=n_\gb$ in
$\left[\frac{50\mathcal K}{\gb^4\gep^2},\frac{50\mathcal K}{\gb^4\gep^2}+1\right)$ and
$h=\zeta 2^{-n}$. If $\gep$ has been chosen  small enough above (how small,
 depending only 
on $\beta_0$), this guarantees that $n\ge n_\zeta$, where $n_\zeta$
was defined just before Lemma \ref{th:moment}. Injecting
\eqref{eq:upb3} in \eqref{eq:q2} finally gives
\begin{equation}
 \tilde \bbE[X_n]\le (1-3\zeta)e^{\zeta}\le 1-\zeta.
\end{equation}

The two conditions of Lemma \ref{th:moment} are therefore verified,
which ensures that the free energy is zero for this value of $h$. In
conclusion, for every $\gb\le \beta_0$ we have proven that
\begin{equation}
 h_c(\gb)\,\ge
 \,  \zeta\, 2^{-n_{\gb}}\, \ge\,  \frac 1{80\mathcal K }\exp\left(-\frac{50  \mathcal K \log 2}{\gb^4\gep^2} \right),
\end{equation}
for some $\gep=\gep(\beta_0)$ sufficiently small but independent of $\gb$.
\qed

\begin{proof}[Proof of Lemma \ref{th:secmom}]
We have
\begin{equation}
  \label{eq:Y2}
  \bE_n(Y_n^2)=\frac 1{n^2}\sum_{1\le i\ne j\le 2^n}\sum_{1\le k\ne l\le 2^n}\bE_n[\delta_i\delta_j]\bE_n[\delta_k\delta_l]
  \bE_n[\delta_i\delta_j\delta_k\delta_l].
\end{equation}
We will consider only the contribution coming from the terms such that $i\ne k,l$ and $j\ne k,l$. The 
remaining terms can be treated similarly and their global contribution is easily seen to be exponentially small 
in $n$. (For instance,  when $i=k$ and $j=l$ one gets
\begin{equation}
  \frac 1{n^2}\sum_{1\le i\ne j\le 2^n}\bE_n[\delta_i\delta_j]^3
  \, \le\,
   \frac1n\bE_n(Y_n)\max_{1\le i<j\le 2^n}\bE_n[\delta_i\delta_j],
\end{equation}
which is exponentially small in $n$, in view of Theorem \ref{th:deltas}.)

\begin{figure}[ht]
\begin{center}
\leavevmode
\epsfxsize =12 cm
\psfragscanon
\psfrag{a}{(a)}
\psfrag{b}{(b)}
\psfrag{k}{$k$}
\psfrag{0}[r]{level $0$: the leaves}
\psfrag{v}[r]{ $v$}
\psfrag{2}[r]{level $2$}
\psfrag{4}[c]{level $n=4$: the root}
\psfrag{3}[c]{$\ldots$}
\psfrag{r}[c]{level $N$: the root}
\epsfbox{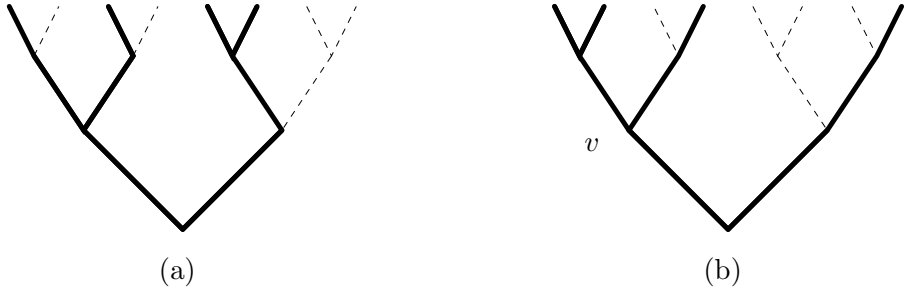}
\end{center}
\caption{The two different possible topologies of the tree $\mathcal
  T^{(n)}_{\{i,j,k,l\}}$.  Case (b) is understood to include also the trees
  where the branch which does not bifurcate is the one on the left, or
  where the sub-branch which bifurcates is the right descendent of the
  node $v$.  We consider only trees where the four leaves are
  distinct, since the remaining ones give a contribution to $\bE_n
  (Y_n^2)$ which vanishes for $n\to\infty$.}
\label{fig:2t}
\end{figure}
From now on, therefore, we assume that $i,j,k,l$ are all distinct. Two cases can occur:
\begin{enumerate}
\item the tree $\mathcal T^{(n)}_{\{i,j,k,l\}}$   (it is better to view it here
has the backbone tree, not as the Galton-Watson tree, see Figure~\ref{fig:treemarg})
has two branches, which
  themselves bifurcate into two sub-branches, {\sl cf.} Fig. \ref{fig:2t}(a)
  for an example.  We call $c$ the level at which the first
  bifurcation occurs ($c=n$ in the example of Fig. \ref{fig:2t}(a)), and
  $a,b$ the levels at which the two branches bifurcate. One has
  clearly $1\le a<c\le n$ and $1\le b<c\le n$. All trees of this form
  can be obtained as follows: first choose a leaf $f_1$,
  between $1$ and $ 2^n$. Then choose $f_2$ among the $2^{a-1}$
  possible ones which join with $f_1$ at level $a$, $f_3$ among the
  $2^{c-1}$ which join with $f_1$ at level $c$ and finally $f_4$ among
  the $2^{b-1}$ which join with $f_3$ at level $b$. 
Clearly we are over-counting the trees (note for example that already 
in the choice of $f_1$ and $f_2$ we are over-counting by a factor $2$), but 
we are only after an {\sl upper bound} for $\bE_n(Y_n^2)$ (the
same remark applies to case (2) below).
We still have
  to specify how to identify $(f_1,f_2,f_3,f_4)$ with a permutation
  of $(i,j,k,l)$. When $(f_1,f_2,f_3,f_4)=(i,j,k,l)$ we get the
  following contribution to \eqref{eq:Y2}:
\begin{equation}
\label{eq:caso11}
\frac1{n^2}  \sum_{1\le a<c\le n}\sum_{1\le b<c\le n}\frac{2^{n+a+b+c-3}}{B_c^{n+a+b+c-3}B_c^{n+a-1} B_c^{n+b-1}},
\end{equation}
where we used Theorem \ref{th:deltas} to write, {\sl e.g.},
$\bE_n[\delta_i\delta_j]=B_c^{-n-a+1}$. Since $B_c=\sqrt 2$ we can
rewrite \eqref{eq:caso11} as
\begin{equation}
  \frac1{\sqrt 2n^2}\sum_{1<c\le n}(c-1)^2 2^{-(n-c)/2},
\end{equation}
which  is clearly bounded as $n$ grows.

If instead $(f_1,f_2,f_3,f_4)=(i,k,j,l)$ or $(f_1,f_2,f_3,f_4)=(i,k,l,j)$,
 one gets
\begin{equation}
\frac1{n^2}  \sum_{1\le a<c\le n}\sum_{1\le b<c\le n}\frac{2^{n+a+b+c-3}}{B_c^{n+a+b+c-3}B_c^{n+c-1} B_c^{n+c-1}},
\end{equation}
which is easily seen to be $O(1/n^2)$.

All the other permutations of $(i,j,k,l)$ give a contribution which
equals, by symmetry, one of the three we just considered.

\item the tree $\mathcal T^{(n)}_{\{i,j,k,l\}}$ has two branches: one of
  them does not bifurcate, the other one bifurcates into two sub-branches, one of which bifurcates into two
sub-sub-branches, {\sl cf.} Figure
  \ref{fig:2t}(b). Let $a_1,a_2,a_3$ be the levels where the three bifurcations occur, ordered so that $1\le a_1<a_2<a_3\le n$. This time,
we choose $f_1$ between $1$ and $2^n$ and then, for $i=1,2,3$, $f_{i+1}$ among the $2^{a_i-1}$ leaves which join with $f_1$ at level $a_i$.
If $(f_1,f_2,f_3,f_4)=(i,j,k,l)$ one has in this case
\begin{multline}
\phantom{move}
\frac1{n^2}  \sum_{1\le a_1<a_2<a_3\le n}\frac{2^{n+a_1+a_2+a_3-3}}{B_c^{n+a_1+a_2+a_3-3}B_c^{n+a_1-1} B_c^{n+a_3-1}}\,
=
\\
\frac1{\sqrt 2 n^2} \sum_{1\le a_1<a_2<a_3\le n}2^{-(n-a_2)/2},
\end{multline}
which is $O(1/n)$. 
Finally, when $(f_1,f_2,f_3,f_4)$ is equal to $(i,k,j,l)$ or 
to $(i,k,l,j)$ one gets
  \begin{multline}
  \phantom{move}
\frac1{n^2}  \sum_{1\le a_1<a_2<a_3\le n}\frac{2^{n+a_1+a_2+a_3-3}}{B_c^{n+a_1+a_2+a_3-3}B_c^{n+a_2-1} B_c^{n+a_3-1}}\,=
\\
\frac1{\sqrt 2 n^2} \sum_{1\le a_1<a_2<a_3\le n}2^{-(n-a_1)/2},
\end{multline}
which is $O(1/n^2)$.
\end{enumerate}
\end{proof}

\section{Marginal relevance of disorder: the non-hierarchical case}\label{sec:nHproof}

Here we prove Theorem \ref{th:main2} and therefore we assume that
\eqref{eq:K} holds with $\alpha=1/2$.

We choose and fix once and for all a $\gamma \in (2/3,1)$ and set for $h>0$
\begin{equation}
\label{eq:kh1}
  k:=k(h):=\left\lfloor \frac 1h\right\rfloor.
\end{equation}

\smallskip

\begin{remark} \rm
\label{rem:k}
In \cite{cf:DGLT}
 the choice $k(h)= \lfloor 1/\tf(0,h)\rfloor $ was made and 
 it corresponds to choosing $k(h)$ equal to the correlation length
 of the annealed system. In our case $1/\tf(0,h) \stackrel{h\searrow 0}
 \sim 1/(c_K h^2)$ ({\sl cf.} \eqref{eq:annF}) and therefore  \eqref{eq:kh1}
 may look surprising. However, there is nothing particularly deep behind: 
 for  $\ga=1/2$, due to the fact that we have to prove delocalization
 for $h\le  \exp(-c_7/\gb^4)$, choosing $k(h)$ that diverges for small
 $h$ like $1/h$ instead of $1/h^2$ just leads to choosing $c_7$
 different by a factor $2$ (and we do not track the precise value of constants). 
 We take this occasion to stress that it is practical to work always
 with sufficiently large values of $k(h)$, and this can be achieved
 by choosing $c_7$ sufficiently large.
\end{remark}

\smallskip

We divide $\N$ into blocks
\begin{equation}
  \label{eq:blocks}
  B_i:=\{(i-1)k+1,(i-1)k+2,\ldots,ik\}\;\mbox{with\;\;}i=1,2,\ldots .
\end{equation}
From now on we assume that $(N/k)$ is integer, and of course it is also the
number of blocks contained in the interval $\{1,\ldots,N\}$.

We define, in analogy with the hierarchical case, 
\begin{equation}
  A_N\, :=\, \bbE\left(Z_{N,\go}^\gamma\right),
\end{equation}
and we note that, as in \eqref{eq:JensAn}, Jensen's inequality implies that a
sufficient condition for $\tf(\beta,h)=0$ is that $A_N$ does not
diverge when $N\to\infty$.  Therefore, our task is to show that 
for every $\gb_0>0$ we can find $c_7>0$ such that
for every $\beta\le \beta_0$ and $h$ such that
\begin{equation}
  \label{eq:assh}
0<h\le \exp(-c_7/\beta^4), 
\end{equation}
one has that
$\sup_N A_N<\infty$.

\subsection{Decomposition of $Z_{N,\go}$ and change of measure}
The first step is a decomposition of the partition function similar to
that used in \cite{cf:T_cg}, which is a refinement of the strategy employed 
in \cite{cf:DGLT}.  For $0<i\le j$ we let
$Z_{i,j}:=Z_{(j-i),\theta^i\go}$, with $(\theta^i\go)_a:=\go_{i+a},
a\in \N$, {\sl i.e.}, $\theta^i\go$ is the result of the application to
$\go$ of a shift by $i$ units to the left.  We decompose $Z_{N,\go}$
according to the value of the first point ($n_1$) of $\tau$ after $0$,
the last point ($j_1$) of $\tau$ not exceeding $n_1+k-1$, then the
first point ($n_2$) of $\tau$ after $j_1$, and so on. We call $i_r$
the index of the block in which $n_r$ falls, 
and $\ell:=\max\{r:n_r\le N\}$,
see Figure \ref{fig:decomp}. Due to the constraint $N\in\tau$, one has always $i_\ell=(N/k)$.
\begin{figure}[ht]
\begin{center}
\leavevmode
\epsfxsize =12.4 cm
\psfragscanon
\psfrag{0}[c]{\small $0$}
\psfrag{ui}[c]{\tiny $i_0=j_0=0$}
\psfrag{N}[c]{\small N} 
\psfrag{B1}[c]{\small $B_1$}
\psfrag{B2}[c]{\small $B_3$}
\psfrag{B3}[c]{\small $B_4$}
\psfrag{B8}[c]{\small $B_9$}
\psfrag{B9}[c]{\small $B_{10}$}
\psfrag{B10}[c]{\small $B_{11}$}
\psfrag{B13}[c]{\small $B_{14}$}
\psfrag{Z1}[c]{\tiny $Z_{n_1,j_1}$}
\psfrag{Z2}[c]{\tiny $Z_{n_2,j_2}$}
\psfrag{Z3}[c]{\tiny $Z_{n_3,j_3}$}
\psfrag{Z4}[c]{\tiny $Z_{n_4,N}$}
\psfrag{n1}[c]{ \small $n_1$}
\psfrag{j1}[c]{ \small $j_1$}
\psfrag{n2}[c]{ \small $n_2$}
\psfrag{j2}[c]{ \small $j_2$}
\psfrag{n3}[c]{ \small $n_3$}
\psfrag{j3}[c]{ \small $j_3$}
\psfrag{n4}[c]{ \small $n_4$}
\psfrag{j1}[c]{ \small $j_1$}
\psfrag{k}[c]{\small $k$}
\psfrag{2k}[c]{\small $2k$}
\psfrag{n1+k}[c]{ \tiny $n_1+k$}
\psfrag{n2+k}[c]{ \tiny $n_2+k$}
\psfrag{n3+k}[c]{ \tiny $n_3+k$}
\epsfbox{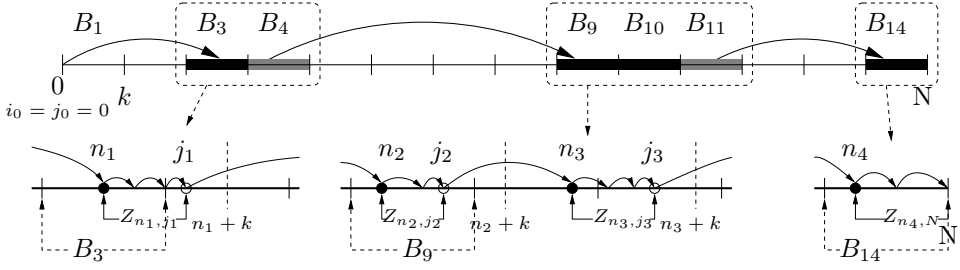}
\end{center}
\caption{\label{fig:decompos} A typical configuration which
  contributes to $\hat Z_\go^{(i_1,\ldots,i_\ell)}$.  In this example
  we have $N/k=14$, $\ell=4$, $i_1=3$, $i_2=9$, $i_3=10$ and
  $i_4=N/k=14$ (by definition $i_\ell =N/k$, {\sl cf.}
  \eqref{eq:dec1}).  Contact points are only in black and grey blocks:
  the blocks $B_{i_j}$, $j=1,\ldots, \ell$ are black and they contain
  one (and only one) point $n_i$. To the
  right of a black block there is either another black block or a grey
  block (except for the last black block, $B_{i_\ell}$, that contains
  the end-point $N$ of the system).  The bottom part of the figure
  zooms on black and grey blocks.  We see that 
  to the right of $n_{i}$ (big black dots) there are
  renewal points before $n_{i}+k$; for $i<\ell$,  $j_i$ is the rightmost one and
  it is marked by a big empty dot (even if it is not the case in the
  figure, it may happen that there is none: in that case $j_i=n_i$).
  Therefore, between empty dots and black dots there is no contact
  point (the origin should be considered an empty dot too).  Note that
  $j_i$ can be in $B_i$, as it is the case for $j_2$, or in $B_{i+1}$,
  as it is the case for $j_1$ and $j_3$.  Going back to the figure on
  top, we observe that the set $M$ of \eqref{eq:M} is
  $\{3,4,9,10,11,14\}$, that is the collection of black and grey
  blocks. We point out that it may happen that a grey block contains
  no point, but it is convenient for us to treat grey blocks as if
  they always contained contact points. It is only to the charges
  $\go$ in black and grey blocks that we apply the change-of-measure
  argument that is crucial for our proof. }
\label{fig:decomp}
\end{figure}

In formulas:
\begin{equation}
\label{eq:dec1}
  Z_{N,\go}=\sum_{\ell=1}^{N/k}\sum_{i_0:=0<i_1<\ldots<i_\ell=N/k}
\hat Z_\go^{(i_1,\ldots,i_\ell)},
\end{equation}
where
\begin{multline}
  \label{eq:Zhat}
  \hat Z_\go^{(i_1,\ldots,i_\ell)}\, :=\, 
  \sum_{n_1\in B_{i_1}}\sum_{j_1=n_1}^{n_1+k-1}\sumtwo{n_2\in B_{i_2}:}{n_2\ge
    n_1+k}\sum_{j_2=n_2}^{n_2+k-1}\ldots \sumtwo{n_{\ell-1}\in B_{i_{\ell-1}}:}
{n_{\ell-1}\ge
    n_{\ell-2}+k}\sum_{j_{\ell-1}=n_{\ell-1}}^{n_{\ell-1}+k-1}\sumtwo
{n_\ell\in B_{N/k}:}{n_\ell\ge n_{\ell-1}+k}\\
 z_{n_1}K(n_1)Z_{n_1,j_1}z_{n_2}K(n_2-j_1)
Z_{n_2,j_2}\, \ldots \, z_{n_\ell}K(n_\ell-j_{\ell-1})Z_{n_\ell,N},
\end{multline}
and $z_n:=e^{\beta\go_n+h-\beta^2/2}$.

Then, from inequality \eqref{eq:fracineq}, we have
\begin{equation}
\label{eq:A_nh}
  A_N\,\le\, \sum_{\ell=1}^{N/k}\sum_{i_0:=0<i_1<\ldots<i_\ell=N/k}
\bbE\left[(\hat Z_\go^{(i_1,\ldots,i_\ell)})^\gamma\right],
\end{equation}
and, 
as in \eqref{eq:hold}, we apply H\"older's inequality to get
\begin{multline}
  \label{eq:Hold}
  \bbE \left[\left(\hat Z^{(i_1,\ldots,i_\ell)}_\go\right)^\gamma\right]\,
=\\
\tilde \bbE\left[\left(\hat Z^{(i_1,\ldots,i_\ell)}_\go\right)^\gamma
\frac{\dd \bbP}{\dd \tilde \bbP}(\go)\right]
\le \left(
\tilde \bbE \hat Z^{(i_1,\ldots,i_\ell)}_\go\right)^\gamma
\left(\bbE \left[\left(\frac{\dd \bbP}{\dd \tilde \bbP}\right)^{\gamma/(1-\gamma)}\right]
\right)^{1-\gamma}.
\end{multline}
The new law $\tilde\bbP:= \tilde\bbP^{(i_1,\ldots,i_\ell)}$ will be
taken to depend on the set $(i_1,\ldots,i_\ell)$. In order to define
it, let first of all
\begin{equation}
  \label{eq:M}
  M:=M(i_1,\ldots,i_\ell):=\{i_1,i_2,\ldots,i_\ell\}\cup \{i_1+1,i_2+1,\ldots,
i_{\ell-1}+1\}.
\end{equation}
Then, we say that under $\tilde\bbP $ the random vector $\go$  is Gaussian, centered 
and with covariance matrix
\begin{equation}
  \label{eq:covarianze}
  \tilde \bbE(\go_i\go_j)\, =\, \ind_{i=j}-\mathcal C_{ij}\,:=\,
  \begin{cases}
   \ind_{i=j}-H_{ij} & \text{ if there exists } u\in M \text{ such that } i,j \in B_u,
   \\
    \ind_{i=j} & \text{ otherwise,}
  \end{cases}
\end{equation}
and 
\begin{equation}
  H_{ij}\,:=\, 
    \begin{cases}
(1-\gamma)/\sqrt{9\,  k(\log k) \;|i-j|}&\text{ if } i\ne j,
\\
0& \text{ if } i=j.      
\label{eq:H}
    \end{cases}
\end{equation}
Note that all the $\mathcal C_{ij}$'s are non-negative.
It is immediate to check that the $k\times k$ symmetric matrix
$\hat H:=\{H_{ij}\}_{i,j=1}^k$ satisfies
\begin{equation}
  \label{eq:HS_H}
\Vert\hat H\Vert\, :=\,
\sqrt{\sum_{i,j=1}^k H_{ij}^2}\, \le \frac{1-\gamma}2, 
\end{equation}
for $k$ sufficiently large. 
In words: $\omega_n$'s in different
blocks are independent; in blocks $B_u$ with $u\notin M$ they are just
IID standard Gaussian random variables, while if $u\in M$
then the random vector $\{\go_n\}_{n\in B_u}$ has
covariance matrix $I-\hat H$, where $I$ is the $k\times k$ identity matrix.
Note that, since $\Vert\hat H\Vert$ dominates the spectral
radius of $\hat H$, \eqref{eq:HS_H} guarantees 
  that $I -\hat H$ is positive definite (and also that 
  $I -(1-\gamma)^{-1}\hat H$ is positive definite, that will be needed just below).

The last factor in the right-hand side of \eqref{eq:Hold} is easily
obtained recalling \eqref{eq:dets} and independence of the $\go_n$'s
in different blocks, and one gets
\begin{equation}
\label{eq:RN_nh}
\left(\tilde \bbE\left[ \left(\frac{\dd \bbP}{\dd \tilde \bbP}\right)^{\gamma/(1-\gamma)}\right]
\right)^{1-\gamma}=\left(\frac{\det(I-\hat H)}
{(\det(I-1/(1-\gamma)\hat H))^{1-\gamma}}
\right)^{|M|/2}.
\end{equation}
Since $\hat H$ has trace zero and its (Hilbert-Schmidt) norm satisfies
\eqref{eq:HS_H}, one can apply $\det(I-\hat H)\le
\exp(-\mbox{Trace}(\hat H))=1$ and \eqref{eq:dano} (with $V$ replaced by
$\hat H$ and $\gep$ by $1$) to get that the right-hand side of
\eqref{eq:RN_nh} is bounded above by $ \exp(|M|/2)$, which in turn is 
bounded by $\exp(\ell)$.  Together with \eqref{eq:Hold} and
\eqref{eq:A_nh}, we conclude that
\begin{equation}
  A_N\le\sum_{\ell=1}^{N/k}\sum_{i_0:=0<i_1<\ldots<i_\ell=N/k}
e^{{\ell}} \left[\tilde \bbE \hat Z_\go^{(i_1,\ldots,i_\ell)}
\right]^\gamma.
\label{eq:sqbr}
\end{equation}

\subsection{Reduction to a non-disordered model  }

We wish to  bound  the right-hand side of \eqref{eq:sqbr} 
with the
partition function of a non-disordered pinning model in the
delocalized phase, which goes to zero for large $N$.  We start by
claiming that 
\begin{multline}
\label{eq:claimU}
  \tilde \bbE \hat Z_\go^{(i_1,\ldots,i_\ell)}\, \le\,  \sum_{n_1\in B_{i_1}}\ldots \sumtwo{n_\ell\in B_{N/k}:}{n_\ell\ge 
n_{\ell-1}+k} K(n_1)K(n_2-j_1)\ldots K(n_\ell-j_{\ell-1})\\
\times U(j_1-n_1)U(j_2-n_2)\ldots U(N-n_\ell),
\end{multline}
where
\begin{equation}
\label{eq:U}
  U(n)\, =\, c_{8}\bP(n\in\tau)\bE\left[e^{-\beta^2\sum_{1\le i<j\le {n/2}}H_{ij}\delta_i
\delta_j}
\right],
\end{equation}
and $c_8$ is a positive constant depending only on $K(\cdot)$. This is
proven in Appendix \ref{sec:appprova}. We are also going to make use
of:

\medskip

\begin{lemma}
\label{th:condlemma}
\label{th:lemma_cg}
   There exists $C_2=C_2(K(\cdot))<\infty$ such that if, for some $\eta>0$,
  \begin{equation}
    \label{eq:condlemma1}
    \sum_{j=0}^{k-1}U(j)\le \eta  {\sqrt k}
  \end{equation}
and 
\begin{equation}
  \label{eq:condlemma2}
  \sum_{j=0}^{k-1}\sum_{n\ge k}U(j)K(n-j)\le \eta,
\end{equation}
then there exists $C_1=C_1(\eta, k,K(\cdot))$ 
such that
the right-hand side of \eqref{eq:claimU} is  bounded above by 
\begin{equation}
  C_1 \eta^\ell C_2^{\ell}\prod_{r=1}^\ell\frac1{(i_r-i_{r-1})^{3/2}}.
\end{equation}
\end{lemma}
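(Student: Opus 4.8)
The plan is to collapse the multiple sum on the right-hand side of \eqref{eq:claimU} (call it $S$) into a telescoping product of one-block estimates, feeding in the hypotheses \eqref{eq:condlemma1}--\eqref{eq:condlemma2} at each step. Write $d_r:=i_{r+1}-i_r$, with the convention $i_0=0$; the target is $S\le C_1(C_2\eta)^\ell\prod_{r=0}^{\ell-1}d_r^{-3/2}$, which is the asserted bound since $\prod_{r=0}^{\ell-1}d_r^{-3/2}=\prod_{r=1}^\ell(i_r-i_{r-1})^{-3/2}$.

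First I would carry out the sums over the interior points $j_1,\dots,j_{\ell-1}$, each of which runs over $\{n_r,\dots,n_r+k-1\}$. Setting $T(n,n'):=\sum_{b=0}^{k-1}U(b)\,K(n'-n-b)$ for $n'\ge n+k$ (note $n'-n-b\ge 1$, so $K$ is always evaluated on $\N$), this rewrites $S$ exactly as
\[
S=\sum_{n_1\in B_{i_1}}K(n_1)\sum_{\substack{n_2\in B_{i_2}\\ n_2\ge n_1+k}}T(n_1,n_2)\cdots\sum_{\substack{n_\ell\in B_{i_\ell}\\ n_\ell\ge n_{\ell-1}+k}}T(n_{\ell-1},n_\ell)\,U(N-n_\ell).
\]

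The crux is a uniform one-block bound: there is $C_2=C_2(K(\cdot))<\infty$ such that $\sum_{n'\in B_{i_{r+1}},\,n'\ge n+k}T(n,n')\le C_2\eta\,d_r^{-3/2}$ for every $n\in B_{i_r}$, and likewise $\sum_{n_1\in B_{i_1}}K(n_1)\le C_2\,d_0^{-3/2}$. I would prove it by splitting on the gap. If $d_r\le 2$: enlarge the sum to run over all $n'\ge n+k$; after the substitution $m=n'-n$ it becomes $\sum_{b=0}^{k-1}U(b)\sum_{m\ge k}K(m-b)\le\eta$ by \eqref{eq:condlemma2}, which is $\le 2^{3/2}\eta\,d_r^{-3/2}$. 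If $d_r\ge 3$: then $n'-n-b\ge (d_r-2)k+2\ge d_rk/3$ uniformly, so $K(n'-n-b)\le\bar C_K\,(3/(d_rk))^{3/2}$ with $\bar C_K:=\sup_{m\ge 1}m^{3/2}K(m)<\infty$; since $B_{i_{r+1}}$ has $k$ elements and $\sum_{b=0}^{k-1}U(b)\le\eta\sqrt k$ by \eqref{eq:condlemma1}, the sum is at most $k\cdot\bar C_K 3^{3/2}(d_rk)^{-3/2}\cdot\eta\sqrt k=\bar C_K 3^{3/2}\eta\,d_r^{-3/2}$. The bound on $\sum_{n_1\in B_{i_1}}K(n_1)$ is the same computation (trivially using $\sum_nK(n)=1$ when $i_1\le 2$). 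One sets $C_2:=\max\{2^{3/2},\bar C_K 3^{3/2}\}$.

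Finally I would telescope: put $F_\ell(n):=U(N-n)$ and $F_r(n):=\sum_{n'\in B_{i_{r+1}},\,n'\ge n+k}T(n,n')F_{r+1}(n')$, so $S=\sum_{n_1\in B_{i_1}}K(n_1)F_1(n_1)$. Since all summands are nonnegative, the one-block bound gives $\sup_n F_r(n)\le C_2\eta\,d_r^{-3/2}\sup_n F_{r+1}(n)$, so by downward induction $\sup_n F_1(n)\le\big(\prod_{r=1}^{\ell-1}C_2\eta\,d_r^{-3/2}\big)\max_{0\le b<k}U(b)$, and $\max_{0\le b<k}U(b)\le\eta\sqrt k$ (again \eqref{eq:condlemma1}) supplies the ``last'' factor of $\eta$. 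Multiplying by $\sum_{n_1\in B_{i_1}}K(n_1)\le C_2d_0^{-3/2}$ yields $S\le\sqrt k\,(C_2)^\ell\eta^\ell\prod_{r=0}^{\ell-1}d_r^{-3/2}$, i.e. the claim with $C_1:=\sqrt k$, which is allowed to depend on $k$ (held fixed throughout). The step I expect to be delicate is the small-gap case $d_r\le 2$ of the one-block bound: there $K$ does not decay, so one is forced to invoke \eqref{eq:condlemma2}, which controls only a sum over the whole half-line $\{n\ge k\}$; one therefore has to discard the information that $n'$ lies in the \emph{particular} block $B_{i_{r+1}}$, which is wasteful but harmless precisely because $d_r^{-3/2}$ is then of order one. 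One must also be careful to keep $C_2$ depending on $K(\cdot)$ alone, letting the harmless factor $\sqrt k$ be swallowed by $C_1$.
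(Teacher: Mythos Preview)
Your proof is correct and follows essentially the same strategy as the paper: distinguish at each step whether the block gap $i_{r}-i_{r-1}$ is small ($\le 2$, handled via hypothesis \eqref{eq:condlemma2}) or large ($\ge 3$, handled via \eqref{eq:condlemma1} together with the tail bound $K(m)\le \bar C_K m^{-3/2}$), and combine to get the product $\prod_r (i_r-i_{r-1})^{-3/2}$. The paper organizes this by first globally isolating the long-jump set $J=\{r:i_r>i_{r-1}+2\}$ and discarding the last factor $U(N-n_\ell)$ into $C_1$ at the outset, whereas you absorb the $j$-sums into the kernel $T(n,n')$ and telescope one block at a time, extracting the final $\eta$ from $\max_b U(b)\le \eta\sqrt{k}$; the substance is the same, and your packaging even yields the explicit $C_1=\sqrt{k}$.
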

\medskip
It is important to note that $C_2$ does not depend on $\eta$.

Lemma~\ref{th:condlemma} is a small variation on 
 \cite[Lemma 3.1]{cf:T_cg}, but,
 both because the model we are considering is somewhat different
 and for sake of completeness, we give
 the details
of the proof in Appendix \ref{sec:appprova}.
\smallskip

Now assume that conditions \eqref{eq:condlemma1}-\eqref{eq:condlemma2}
are verified for some $\eta$. Collecting \eqref{eq:sqbr},
\eqref{eq:claimU} and Lemma \ref{th:lemma_cg}, we have then
\begin{equation}
  A_N\le C_1^\gamma\sum_{\ell=1}^{N/k}\sum_{i_0:=0<i_1<\ldots<i_\ell=N/k}
\left(\eta^\gamma\,C_2^{\gamma}e 
\right)^\ell\prod_{r=1}^\ell\frac1{(i_r-i_{r-1})^{(3/2)\gamma}}.
\label{eq:puremod}
\end{equation}
In the right-hand side we recognize, apart from the irrelevant
multiplicative constant $C_1^\gamma$, the partition function of a
non-random $(\beta=0)$ pinning model with $N$ replaced by $N/k$,
$K(\cdot)$ replaced by
\begin{equation}
  \hat K(n)\, =\, \frac1{n^{(3/2)\gamma}}\frac1{\sum_{i\ge1}i^{-(3/2)\gamma}},
\end{equation}
and $h$ replaced by
\begin{equation}
  \label{eq:hbar}
  \hat h:=\log \left(\eta^\gamma\,C_2^{\gamma}e
  \sum_{n\in\N}\frac1{n^{(3/2)\gamma}}\right).
\end{equation}
Note that $\hat K(\cdot)$ is normalized to be a probability measure on $\N$,
which is possible since (by assumption) $\gamma>2/3$, and that it has a
power-law tail with exponent $(3/2)\gamma>1$. Thanks to Lemma
\ref{th:puro} below, one has that the right-hand side of \eqref{eq:puremod} tends to zero
for $N\to\infty$ whenever
\begin{equation}
\label{eq:condeta}
\hat h<0.
\end{equation}
Therefore, if  $\eta$ is so small that \eqref{eq:condeta} holds,
we can conclude that $A_N$ tends to zero for $N\to\infty$ and therefore $\tf(\beta,h)=0$.

The proof of Theorem \ref{th:main2} is therefore concluded once we prove

\medskip
\begin{proposition}
\label{th:stimeU}
Fix $\eta>0$ such that \eqref{eq:condeta} holds. For every $\gb_0>0$
there exists $0<c_7<\infty$ such that if $\beta\le \gb_0$ and $0<h\le
\exp(-c_7/\beta^4)$, conditions
\eqref{eq:condlemma1}-\eqref{eq:condlemma2} are verified.
\end{proposition}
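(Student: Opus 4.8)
The plan is to reduce the verification of \eqref{eq:condlemma1}--\eqref{eq:condlemma2} to a first‑ and second‑moment estimate for the quadratic functional that appears in the change of measure. Set
\[
W_n\, :=\, \sum_{1\le i<j\le n/2}\frac{\delta_i\delta_j}{\sqrt{j-i}},\qquad
\lambda_k\, :=\, \frac{\gb^2(1-\gamma)}{3\sqrt{k\log k}},
\]
so that, recalling \eqref{eq:H} (and $\sqrt 9=3$), the definition \eqref{eq:U} reads $U(n)=c_8\,\bP(n\in\tau)\,g(n)$ with $g(n):=\bE[e^{-\lambda_k W_n}]$. First I would record the renewal asymptotics forced by \eqref{eq:K} with $\ga=1/2$: by the renewal theorem $\bP(n\in\tau)\asymp n^{-1/2}$, and $\sum_{m\ge M}K(m)\asymp M^{-1/2}$; hence $U(n)\le c_8\bP(n\in\tau)\le Cn^{-1/2}$ and $\sum_{0\le j<k}\bP(j\in\tau)\le C\sqrt k$. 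These elementary facts already show that the portion $0\le j\le k/\sqrt{c_7}$ of each of the sums \eqref{eq:condlemma1}--\eqref{eq:condlemma2} is harmless (it is $O(\sqrt{k/\sqrt{c_7}}\,)=o(\sqrt k)$, resp.\ $O(c_7^{-1/4})$), so the entire problem is to control $g(n)$ uniformly for $n$ in the window $[k/\sqrt{c_7},\,k]$.

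The heart of the matter — and the step I expect to be the main obstacle — is a first‑ and second‑moment analysis of $W_n$, in the spirit of Lemma~\ref{th:secmom}. The first moment is straightforward: $\bE[W_n]=\sum_{1\le i<j\le n/2}\bP(i\in\tau)\bP(j-i\in\tau)/\sqrt{j-i}\asymp\sqrt n\,\log n$, the logarithm coming from $\sum_{d}\bP(d\in\tau)/\sqrt d\asymp\sum_d d^{-1}$. For the second moment one expands
\[
\bE[W_n^2]\, =\, \sumtwo{1\le a<b\le n/2}{1\le a'<b'\le n/2}
\frac{\bE[\delta_a\delta_b\delta_{a'}\delta_{b'}]}{\sqrt{(b-a)(b'-a')}}
\]
and decomposes the sum according to the mutual position of the two pairs $\{a,b\}$ and $\{a',b'\}$ — the renewal analogues of the two tree topologies of Figure~\ref{fig:2t}: the four indices are distinct and (i) the intervals $[a,b],[a',b']$ are disjoint, or (ii) interleaved, or (iii) nested; or else (iv) at least two of $a,b,a',b'$ coincide. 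Using the renewal (Markov) factorization $\bE[\delta_{t_1}\cdots\delta_{t_4}]=\bP(t_1\in\tau)\prod_{r=1}^{3}\bP(t_{r+1}-t_r\in\tau)$ for $t_1<\cdots<t_4$, together with the telescoping identity $\sum_{t\ge1}[\bP(t\in\tau)-\bP(t+s\in\tau)]=\sum_{t=1}^{s}\bP(t\in\tau)\asymp\sqrt s$ and — crucially, this is where $\ga=1/2$ enters — the fact that $\sum_d\bP(d\in\tau)/d<\infty$ while $\sum_d\bP(d\in\tau)/\sqrt d$ diverges only logarithmically, one shows that the ``disjoint and far apart'' part of (i) reproduces $(\bE W_n)^2$, while all the remaining contributions (the overlap correction to (i), the topologies (ii)--(iii), and (iv)) are of smaller order; concretely $\mathrm{Var}(W_n)=o\big((\bE W_n)^2\big)$, with a quantitative rate of the type $\mathrm{Var}(W_n)/(\bE W_n)^2\le C_9/\log n$. (This gain of a logarithmic factor is what distinguishes the renewal model from the hierarchical one, where $\mathrm{Var}(Y_n)$ stays comparable to $(\bE Y_n)^2$ and only Paley–Zygmund is available; here it is responsible for the gain becoming \emph{arbitrarily} small.)

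Granting this estimate, Chebyshev's inequality gives $\bP\big(W_n\ge\tfrac12\bE[W_n]\big)\ge1-C_9/\log n$, hence
\[
g(n)\, \le\, e^{-\lambda_k\bE[W_n]/2}\, +\, \frac{C_9}{\log n}.
\]
Now for $k/\sqrt{c_7}\le n\le k$ and $h\le e^{-c_7/\gb^4}$ one has $\log k\ge c_7/\gb^4$, i.e.\ $\gb^2\sqrt{\log k}\ge\sqrt{c_7}$, and (using $\log n\ge\tfrac12\log k$, which holds since $k\ge c_7$)
\[
\lambda_k\,\bE[W_n]\, \asymp\, \gb^2\sqrt{\log k}\ \sqrt{n/k}\ \frac{\log n}{\log k}\ \ge\ c\,\gb^2\sqrt{\log k}\;c_7^{-1/4}\ \ge\ c\,c_7^{1/4},
\]
while $\log n\ge c_7/(2\gb_0^4)$. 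Therefore $\displaystyle\sup_{k/\sqrt{c_7}\le n\le k}g(n)\le e^{-c\,c_7^{1/4}/2}+\tfrac{2C_9\gb_0^4}{c_7}=:\rho(c_7)$, which tends to $0$ as $c_7\to\infty$.

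It remains to conclude. Split each of the sums \eqref{eq:condlemma1}--\eqref{eq:condlemma2} at $j_\star:=k/\sqrt{c_7}$. For \eqref{eq:condlemma1},
\[
\sum_{j=0}^{k-1}U(j)\, \le\, c_8\Big(\sum_{j<j_\star}\bP(j\in\tau)\, +\, \rho(c_7)\!\!\sum_{j_\star\le j<k}\!\!\bP(j\in\tau)\Big)\, \le\, c_8 C\sqrt k\,\big(c_7^{-1/4}+\rho(c_7)\big),
\]
which is $\le\eta\sqrt k$ once $c_7$ is large. For \eqref{eq:condlemma2}, bounding $\sum_{n\ge k}K(n-j)\le C(k-j)^{-1/2}$ and using that $\sum_{j=1}^{k-1}\big(j(k-j)\big)^{-1/2}$ stays bounded (in fact $\to\pi$) as $k\to\infty$, the $j\ge j_\star$ part is $\le c_8 C\,\rho(c_7)$ and, since $(k-j)^{-1/2}\le 2k^{-1/2}$ there, the $j<j_\star$ part is $\le 2c_8 C\,\sqrt{j_\star}/\sqrt k=2c_8 C\,c_7^{-1/4}$; hence the full double sum is $\le\eta$ for $c_7$ large. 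Finally one checks that $k=\lfloor1/h\rfloor$ is large enough for all of the asymptotics above and for Lemma~\ref{th:lemma_cg} to apply, which is again guaranteed by taking $c_7$ large (cf.\ Remark~\ref{rem:k}). This proves Proposition~\ref{th:stimeU}.
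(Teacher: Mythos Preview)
The central claim $\mathrm{Var}(W_n)/(\bE W_n)^2\le C_9/\log n$ is false, and without it the argument collapses: Chebyshev then yields only $g(n)\le e^{-\lambda_k\bE W_n/2}+\text{const}$ with a constant that does \emph{not} tend to zero, so $\rho(c_7)\not\to0$ and neither of the two sums can be made $\le\eta$.  In fact $W_n/\bE W_n$ converges in distribution to a \emph{non-degenerate} random variable (a positive multiple of $|Z|$ with $Z$ standard Gaussian), hence $\mathrm{Var}(W_n)\asymp(\bE W_n)^2$.  To see where your decomposition goes wrong, write $W_n=\sum_i\delta_i\,Y^{(i)}$ with $Y^{(i)}:=\sum_{j>i}\delta_j/\sqrt{j-i}$.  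Conditionally on $\delta_i=1$, $Y^{(i)}/\log n$ does concentrate (this is the $\mathrm{var}_\bP(Y_L)=O(\log L)$ estimate that the paper actually proves, for the \emph{linear} functional $Y_L=\sum_j\delta_j/\sqrt j$); but the outer sum $\sum_i\delta_i$ does not concentrate, since $n^{-1/2}\sum_{i\le n}\delta_i$ converges to a half-normal.  Concretely, in your ``disjoint'' case the covariance correction $u(a)u(b-a)u(b'-a')\bigl[u(a'-b)-u(a')\bigr]$ (with $u(m):=\bP(m\in\tau)\asymp m^{-1/2}$), summed with the weights $[(b-a)(b'-a')]^{-1/2}$, is of order $n(\log n)^2$, the same as $(\bE W_n)^2$ --- not smaller.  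Your parenthetical remark has it backwards: the renewal case is \emph{not} better than the hierarchical one at the level of second moments.

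What the paper does instead is to establish the full distributional limit (Lemma~\ref{th:CE}) and then exploit that the limit $c|Z|$ has no atom at $0$: by bounded convergence $\bE[e^{-a\,W_n/\bE W_n}]\to\bE[e^{-ac'|Z|}]$ as $n\to\infty$, and the latter $\to0$ as $a\to\infty$; since $\lambda_k\bE W_n\asymp\gb^2\sqrt{\log k}\,\sqrt{n/k}\,(\log n/\log k)$ and $\gb^2\sqrt{\log k}\to\infty$ under \eqref{eq:assh}, this gives $g(n)\to0$ uniformly on $[\delta k,k]$ for any fixed $\delta>0$.  The remainder of your argument --- the split of the sums in \eqref{eq:condlemma1}--\eqref{eq:condlemma2} and the elementary renewal bounds --- is essentially the paper's and is fine once $g(n)\to0$ has been obtained.
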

\medskip

\noindent
\begin{proof}[Proof of Proposition \ref{th:stimeU}]
We need to show that the two hypotheses of Lemma~\ref{th:condlemma} hold
and for this 
we are going to use the following result:
\medskip

\begin{lemma}
\label{th:CE}
Under the law $\bP$, the  random variable 
\begin{equation}
W_L\, :=\,  (\sqrt{L} \log L)^{-1}
\sum_{1\le i<j\le L} \delta_i
        \delta_j /\sqrt{j-i},
\end{equation}        
        converges in distribution, as $L$ tends to $\infty$, to $c \vert Z \vert$
        ($Z\sim N(0,1)$ and $c$ a positive constant).
\end{lemma}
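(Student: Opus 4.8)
\emph{Strategy.} Write $Q_L:=\sum_{1\le i<j\le L}\delta_i\delta_j/\sqrt{j-i}$, so that $W_L=(\sqrt L\log L)^{-1}Q_L$, let $N_L:=\#\{j\ge1:\tau_j\le L\}$ be the number of renewal epochs in $\{1,\dots,L\}$, and $u(n):=\bP(n\in\tau)$. The plan is to show that $W_L$ is asymptotically equal, in probability, to $c_\alpha\,N_L/\sqrt L$, where $c_\alpha>0$ is the constant in the strong renewal theorem $u(n)\sim c_\alpha n^{-1/2}$ (which holds for our $K(\cdot)$ precisely because $\alpha=1/2$), and then to invoke the classical scaling limit of $N_L$: since $\tau_j/j^2$ converges to a one-sided $1/2$-stable law (equivalently, the renewal set rescaled by $1/L$ converges to the zero set of a Brownian motion), one has $N_L/\sqrt L\Rightarrow c_\ast|Z|$ for an explicit $c_\ast>0$, the $|Z|$ coming from the L\'evy identity $\sigma_1\overset{d}{=}1/Z^2$ for the inverse of the $1/2$-stable subordinator. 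This yields the lemma with $c=c_\alpha c_\ast$ (positive since $c_\alpha,c_\ast>0$).

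\emph{Reduction to $N_L$.} First we would decompose $Q_L=\sum_{i=1}^L\delta_i\,\phi_L(i)$ with $\phi_L(i):=\sum_{i<j\le L}\delta_j/\sqrt{j-i}$, and set $\Psi(m):=\sum_{m'=1}^m u(m')/\sqrt{m'}=(1+o(1))c_\alpha\log m$. By the renewal property $\bE[\delta_i\phi_L(i)]=u(i)\Psi(L-i)$, so the surrogate $\bar Q_L:=\sum_{i=1}^L\delta_i\Psi(L-i)$ has the same mean as $Q_L$. Using that $\Psi(L-i)=(1+o(1))c_\alpha\log L$ uniformly over the bulk $i\le L-L/\log L$, that the epochs $i>L-L/\log L$ contribute only $O(\sqrt L)$ to $\bE\bar Q_L$, and that $c_\alpha(\log L)(N_L-N_{L-L/\log L})=O(\sqrt L)$ in probability (all via first-moment bounds together with $\bE N_L=O(\sqrt L)$), one obtains $\bar Q_L=c_\alpha(\log L)N_L+o(\sqrt L\log L)$ in probability. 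Thus everything reduces to the fluctuation bound $\bE[(Q_L-\bar Q_L)^2]=o(L(\log L)^2)$.

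\emph{The second moment.} Put $\Delta_i:=\delta_i(\phi_L(i)-\Psi(L-i))$, so $Q_L-\bar Q_L=\sum_i\Delta_i$ with $\bE\Delta_i=0$. The key point is a conditional decoupling: for $i<i'$, conditioning on $\sigma(\delta_1,\dots,\delta_{i'})$ and using that $\Psi(L-i')$ is \emph{exactly} the conditional mean of $\phi_L(i')$ given $\{i'\in\tau\}$, all the cross terms vanish and one is left with
\begin{equation}
\bE[\Delta_i\Delta_{i'}]\;=\;u(i)\,u(i'-i)\,\mathrm{Cov}\!\left(\sum_{m=1}^{L-i'}\delta'_m\,(m+i'-i)^{-1/2}\,,\ \sum_{m=1}^{L-i'}\delta'_m\,m^{-1/2}\right),
\end{equation}
$\delta'$ being an independent copy of $\{\delta_n\}$. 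By Cauchy--Schwarz this covariance is $o((\log L)^2)$ (in fact $O((\log L)^{3/2})$, and presumably $O(\log L)$) once one has the concentration estimate $\mathrm{Var}\big(\sum_{m\le n}\delta_m m^{-1/2}\big)=o((\log n)^2)$; the latter follows from $u(n)\sim c_\alpha n^{-1/2}$ because the $(\log n)^2$ terms in $\bE\big[(\sum\delta_m m^{-1/2})^2\big]$ and in $\big(\bE\sum\delta_m m^{-1/2}\big)^2$ cancel (a Ces\`aro-type computation). Since $\sum_{1\le i<i'\le L}u(i)u(i'-i)=\Theta(L)$, the off-diagonal contribution is $o(L(\log L)^2)$, while the diagonal $\sum_i u(i)\,\mathrm{Var}(\phi_L(i)\mid i\in\tau)\le C(\log L)^2\sum_i u(i)=O(\sqrt L(\log L)^2)$ is negligible as well. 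Chebyshev's inequality then gives $Q_L-\bar Q_L=o(\sqrt L\log L)$ in probability, and combining with the reduction step and the limit law of $N_L$ completes the proof.

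\emph{Main obstacle.} The genuinely delicate ingredient is the concentration bound $\mathrm{Var}\big(\sum_{m\le n}\delta_m m^{-1/2}\big)=o((\log n)^2)$: it is exactly what makes the off-diagonal sum summable, and it rests on the precise first-order asymptotics $u(n)\sim c_\alpha n^{-1/2}$, i.e. on the strong renewal theorem at the boundary value $\alpha=1/2$ (available here since $K(n)\sim C_K n^{-3/2}$). Beyond that, the only care needed is in the region where $i$ is close to $L$ in the reduction to $N_L$ (handled by the intermediate cut-off $L/\log L$) and in checking that the sub-dominant configurations dropped along the way are genuinely of lower order; these are routine once the variance estimate is in place.
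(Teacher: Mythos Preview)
Your strategy is the same as the paper's: both write $W_L=(\sqrt L\log L)^{-1}\sum_i\delta_i\,Y_L^{(i)}$ with $Y_L^{(i)}=\phi_L(i)$, establish that $Y_n/\log n\to \hat c_K:=1/(2\pi C_K)$ (the key input being the variance bound $\mathrm{Var}(Y_n)=o((\log n)^2)$, which the paper sharpens to $O(\log n)$ by directly estimating $\sum_{i<j}(ij)^{-1/2}\mathrm{Cov}(\delta_i,\delta_j)$), and then reduce to the classical limit $N_L/\sqrt L\Rightarrow c_\ast|Z|$.

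The implementation differs. The paper replaces $Y_L^{(i)}/\log L$ by the \emph{constant} $\hat c_K$ and controls the error in $L^1$: after cutting off $i>(1-\gep)L$, one simply bounds
\[
\bE\Big|W_{L,\gep}-\hat c_K L^{-1/2}\textstyle\sum_{i\le(1-\gep)L}\delta_i\Big|\le L^{-1/2}\sum_{i\le(1-\gep)L}u(i)\,\bE\big|Y_{L-i}/\log L-\hat c_K\big|,
\]
which is $o(1)$ by the $L^1$ convergence of $Y_n/\log n$. No cross terms appear. Your route replaces $\phi_L(i)$ by its conditional mean $\Psi(L-i)$ and then controls $Q_L-\bar Q_L$ in $L^2$, which forces you to handle the covariances $\bE[\Delta_i\Delta_{i'}]$. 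Your formula for these is correct, but the Cauchy--Schwarz step needs $\mathrm{Var}(B')=o((\log L)^2)$ \emph{uniformly in} $d=i'-i$, where $B'=\sum_m\delta'_m(m+d)^{-1/2}$; this is not literally the statement $\mathrm{Var}\big(\sum_m\delta_m m^{-1/2}\big)=o((\log n)^2)$ that you invoke, and requires its own (admittedly analogous) computation. The paper's $L^1$ argument sidesteps this issue entirely and is shorter.
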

\medskip

This lemma, 
the proof of which may be found just below (together with the explicit value of $c$), directly implies that,
if we set $S(a,L):=  \bE\left[\exp\left(- a W_L\right)\right]$, we have
 $\lim_{a\to\infty} \lim_{L\to\infty}S(a,L)\, = \, 0$ and, by the monotonicity
 of $S (\cdot, L)$, we get
\begin{equation}
  \label{eq:small}
  \lim_{a, L\to\infty}   S(a,L)\, = \, 0.
\end{equation}

Let us  verify \eqref{eq:condlemma1}.
Note first of all that ({\sl cf.} \eqref{eq:U} and \eqref{eq:H})
\begin{multline}
  U(n)\, =\, c_8 \bP(n\in\tau)
     S\left(\beta^2(1-\gamma)\sqrt{\log k}\,\sqrt{\frac {n/2}{9\,  k}}\,
        \frac{\log (n/2)}{
          {\log k}}
        , \frac n2\right)\\
         =:\, c_8 \bP(n\in\tau) s_\gb(k,n).
\end{multline}
We recall also that
(\cite[Th. B]{cf:Doney})
\begin{equation}
\label{eq:doney}
  \bP(n\in\tau)\stackrel{n\to\infty}\sim \frac1{2\pi C_K \sqrt n},
\end{equation}
and therefore there exists $c_9>0$ such that for every $n \in \N$
\begin{equation}
\label{eq:doney-bound}
  \bP(n\in\tau)\, \le \,  \frac{c_9}{\sqrt {n}}.
\end{equation}
Split the sum in \eqref{eq:condlemma1} according to whether
$j \le \gd k$ or not ($\gd=\gd(\eta) \in (0,1)$
is going to be chosen below).
By using $S(a,L)\le 1$ (in the case $j \le \gd k$) and  \eqref{eq:doney-bound}
we obtain
\begin{equation}
\label{eq:interm4}
\sum_{j=0}^{k-1} U(j) \, \le \, c_8+c_8 c_9 \sum_{j=1}^{\gd k}
\frac1{\sqrt{j}} \, +\, c_8 c_9 \sum_{j= \gd k +1}^{k-1}
\frac1{\sqrt{j}}\, s_\gb(k,j).
\end{equation}
Since if $c_7$ is chosen sufficiently large
\begin{equation}
  \beta^2\sqrt{\log k}\ge\sqrt{c_7-\beta^4\log 2}\ge \sqrt{c_7}/2, 
\end{equation}
and since $k$ may be made large by increasing $c_7$, 
we directly see that \eqref{eq:small} implies
that $s_\gb(k,j)$ may be made  smaller than (say) 
$\gd$ for every $\delta k <j<k$ by choosing $c_7$ sufficiently large. Therefore \eqref{eq:interm4} implies
 \begin{equation}
\label{eq:interm5}
\sum_{j=0}^{k-1} U(j) \, \le \, 4c_8 c_9 (\sqrt{\gd} + \gd) \sqrt{k}.
\end{equation}       
By choosing $\gd=\gd(\eta) $ such that         $4c_8 c_9 (\sqrt{\gd} + \gd) \le \eta$,
we have \eqref{eq:condlemma1}.
The proof of \eqref{eq:condlemma2} is absolutely analogous to the proof
of \eqref{eq:condlemma1} and it is therefore omitted. 
\end{proof}

\subsection{Proof of Lemma~\ref{th:CE}}
We introduce the notation
\begin{equation}
Y_L^{(i)}\, :=\, \sum_{j=i+1}^{L} \frac{\gd_j}{\sqrt{j-i}}
,\text{\  so that \  }
W_L \, =\, \frac 1{\sqrt{L}\log L} \sum_{i=1}^{L-1}\gd_i Y_L^{(i)}.
\end{equation}
Let us observe that, thanks to the renewal property of $\tau$, under
$\bP (\cdot \vert \gd_i=1)$, $Y_L^{(i)}$ is distributed like
$Y_{L-i}:= Y_{L-i}^{(0)}$ (under $\bP$).  
 The first step in the proof is  observing that, in view of \eqref{eq:doney-bound},
\begin{multline}
  \bE \left[\frac 1{\sqrt{L}\log L} \sum_{i=(1-\gep)L }^{L-1}\gd_i Y_L^{(i)}\right]\,
=\\
\frac1{\log L \sqrt L}\sum_{i=(1-\gep)L}^{L-1}\sum_{j=i+1}^L\frac{
\bP(i\in\tau)\bP(j-i\in\tau)}{\sqrt{j-i}}
\, 
=
\, O(\gep),
\end{multline}
uniformly in $L$,
so we can focus on studying $W_{L, \gep}$, defined as $W_L$, but stopping the 
sum over $i$ at $(1-\gep) L$. At this point we use that
\begin{equation}
\label{eq:Chung-Erdos}
\lim_{L \to \infty} \frac{Y_L}{\log L} \, =\, \frac 1{2\pi C_K} \, =: \hat c_K,
\end{equation}
in $L^2(\bP)$ (and hence in $L^1(\bP)$). We postpone the proof of \eqref{eq:Chung-Erdos}
and observe that, thanks to the properties of the logarithm,
it implies that for every $\gep>0$
\begin{equation}
\label{eq:use}
\lim_{L \to \infty}
\sup_{q \in [\gep, 1]} \bE \left[ \left \vert 
\frac 1{\log L} \sum_{j=1}^{qL} \frac {\gd _j}{\sqrt{j}} \, - \hat c_K \right \vert \right]\, =\, 0.
\end{equation}

Let us write
\begin{equation}
R_L \, := W_{L, \gep} \, -\, \frac{\hat c_K} {\sqrt{L} } \sum_{i=1}^{(1-\gep) L} \gd _i
\end{equation}
and note that $ L^{-1/2} \sum_{i=1}^{(1-\gep) L} \gd _i$ converges in law toward
 $\sqrt{(1-\gep)/(2\pi C_K^2)} \, \vert Z\vert$.
 This follows 
  directly by using that the event $ \sum_{i=1}^{L} \gd _i \ge  m$
 is the event $\tau_m \le L$ ($\tau_m$ is of course the 
$m$-th point in $\tau$ after $0$) and by using the fact that $\tau_1$ is in
 the domain of attraction of the positive stable law of index $1/2$
 \cite[VI.2 and XI.5]{cf:Feller2}.
It suffices therefore to show that $\bE [\vert R_L\vert] $ tends to zero. 
We have
\begin{multline}
  \bE \left[\vert R_L\vert\right]\, \le\, 
  \frac {1}{\sqrt{L}}   \sum_{i=1}^{(1-\gep) L}
  \bE [ \gd_i ] \bE\left [ \left.\bigg\vert \frac{Y^{(i)}_L}{\log L}-\hat c_K \bigg \vert \;\right\vert
  \delta_i=1 \right] \, =\, 
  \\
  \frac {1}{\sqrt{L}}   \sum_{i=1}^{(1-\gep) L}
  \bE [ \gd_i ] \bE \left[ \left\vert \frac{Y_{L-i}}{\log L}-\hat c_K \right\vert 
  \right] \, =\, o(1), 
\end{multline}
where in the last step we have used \eqref{eq:use} and \eqref{eq:doney-bound}.

Note that we have also proven that $c=(2\pi)^{-3/2} C_K^{-2}$ in the statement of Lemma \ref{th:CE}.

\medskip

We are therefore left with the task of proving \eqref{eq:Chung-Erdos}.
This result has been already proven \cite[Th.~6]{cf:chungerdos} when
$\tau$ is given by the successive returns to zero of a centered,
aperiodic and irreducible random walk on $\bbZ$ with bounded variance
of the increment variable. Note that, by well established local limit
theorems, for such a class of random walks we have \eqref{eq:doney}.
Actually in \cite{cf:chungerdos} it is proven that
\eqref{eq:Chung-Erdos} holds almost surely as a consequence of
$\text{var}_\bP(Y_L)=O(\log L)$.  What we are going to do is
simply to re-obtain such a bound, by repeating the steps in
\cite{cf:chungerdos} and using \eqref{eq:doney}-\eqref{eq:doney-bound}, for the general
renewal processes that we consider (as a side remark: also in our
generalized set-up, almost sure convergence holds).

The proof goes as follows:
by using \eqref{eq:doney} it is straightforward to see that the limit
as $L\to \infty$ of 
$\bE [Y_L/\log L]$ is $\hat c_K$, so that we are done
if we show that $\text{var}_\bP(Y_L/\log L)$ vanishes as $L\to \infty$.
So we start by observing that
\begin{equation}
\label{eq:CEst1}
\text{var}_\bP(Y_L) \, =\, \sum_{i, j} \frac{\bE[ \gd_i \gd_j]-
\bE[ \gd_i] \bE[ \gd_j]
 }{\sqrt{ij}} \, =\, 2 \sum_{i=1}^{L-1}\sum_{j=i+1}^L
  \frac{\bE[ \gd_i \gd_j]-
\bE[ \gd_i] \bE[ \gd_j]
 }{\sqrt{ij}}
 + O(1),
\end{equation}
by \eqref{eq:doney-bound}. Now we compute 
\begin{equation}
\begin{split}
\sum_{i=1}^{L-1}\sum_{j=i+1}^L
  \frac{\bE[ \gd_i \gd_j]-
\bE[ \gd_i] \bE[ \gd_j]
 }{\sqrt{ij}} 
 \, &=\,
 \sum_{i=1}^{L-1} \frac{\bE[\gd_i]}{\sqrt{i}}
 \left[\sum_{j=1}^{L-i} \frac{\bE[\gd_j]}{\sqrt{j+i}}-
 \sum_{j=i+1}^{L} \frac{\bE[\gd_j]}{\sqrt{j}}
 \right]
 \\
 &\le \, 
 \sum_{i=1}^{L-1} \frac{\bE[\gd_i]}{\sqrt{i}}
 \left[\sum_{j=1}^{L-i} \frac{\bE[\gd_j]}{\sqrt{j+i}}-
 \sum_{j=i+1}^{L} \frac{\bE[\gd_j]}{\sqrt{j+i}}
 \right]
 \\
 &\le \, 
 \sum_{i=1}^{L-1} \frac{\bE[\gd_i]}{\sqrt{i}}
 \sum_{j=1}^{i} \frac{\bE[\gd_j]}{\sqrt{j+i}}
 \\
 \le \, \sum_{i=1}^{L-1} &\frac{\bE[\gd_i]}{i} 
  \sum_{j=1}^{i} {\bE[\gd_j]} \le  c_9^2  
  \sum_{i=1}^{L-1} \frac{1}{i^{3/2}} 
  \sum_{j=1}^{i} \frac 1{j^{1/2}} = O(\log L),
 \end{split}
\end{equation}
where, in the last line, we have used \eqref{eq:doney-bound}.
In view of \eqref{eq:CEst1}, we have obtained 
$\text{var}_\bP(Y_L)=O(\log L)$ and
the proof  \eqref{eq:Chung-Erdos}, and therefore of Lemma~\ref{th:CE} is complete.
\qed

        

\appendix
\section{Some technical results and useful estimates}

\subsection{Two results on renewal processes}
The first result concerns the non-disordered pinning model and is well known:
\begin{lemma}
\label{th:puro}
  Let $ K(\cdot)$ be a probability on $\N$ which satisfies \eqref{eq:K} for
some $\alpha>0$. If $ h<0$, we have
that 
\begin{equation}
\label{eq:lemma1}
\sum_{\ell=1}^N \sum_{i_0:=0<i_1<\ldots<i_\ell=N}
e^{h\ell}\prod_{r=1}^\ell K(i_r-i_{r-1})\stackrel {N\to\infty}\longrightarrow0.  
\end{equation}
\end{lemma}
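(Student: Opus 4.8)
The plan is to recognize the left-hand side of \eqref{eq:lemma1} as the partition function $Z_N$ of the homogeneous ($\beta=0$) pinning model \eqref{eq:Znh} at pinning parameter $h$, and to exploit directly that $h<0=h_c(0)$ lies strictly below the critical point. The key observation is that below criticality $Z_N$ is not merely sub-exponential but actually \emph{summable} in $N$, which forces $Z_N\to 0$.

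First I would rewrite the sum in convolution form. Writing $K^{*\ell}$ for the $\ell$-fold convolution of $K$, the inner sum over strictly increasing tuples $0=i_0<i_1<\cdots<i_\ell=N$ of $\prod_{r=1}^\ell K(i_r-i_{r-1})$ is exactly $K^{*\ell}(N)$, so that
\[
Z_N\;=\;\sum_{\ell=1}^{N}e^{h\ell}K^{*\ell}(N)\;=\;\sum_{\ell\ge1}e^{h\ell}K^{*\ell}(N),
\]
the terms with $\ell>N$ vanishing since $K^{*\ell}$ is supported on $\{n\ge\ell\}$. Equivalently, $e^h K(\cdot)$ is a sub-probability on $\N$ of total mass $e^h<1$, and $Z_N=\bP(N\in\tilde\tau)$ is the renewal mass function of the associated terminating renewal.

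Next, since $K$ is a probability measure on $\N$, so is $K^{*\ell}$ for every $\ell\ge1$, hence $\sum_{N\ge1}K^{*\ell}(N)=1$. All terms being nonnegative, Tonelli's theorem gives
\[
\sum_{N\ge1}Z_N\;=\;\sum_{\ell\ge1}e^{h\ell}\sum_{N\ge1}K^{*\ell}(N)\;=\;\sum_{\ell\ge1}e^{h\ell}\;=\;\frac{e^h}{1-e^h}\;<\;\infty,
\]
using $h<0$. A convergent series of nonnegative terms has vanishing general term, so $Z_N\to0$ as $N\to\infty$, which is \eqref{eq:lemma1}.

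There is essentially no obstacle here. The only point to keep in mind is that one should \emph{not} argue via the free energy, which would merely yield $\frac1N\log Z_N\to0$, i.e.\ sub-exponential behaviour, and not the genuine convergence $Z_N\to0$; the summability computation above is precisely what delivers the stronger conclusion. Note also that the tail assumption \eqref{eq:K} is not used at all: only the facts that $K$ is a probability on $\N$ and that $h<0$ enter the argument.
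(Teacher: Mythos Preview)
Your argument is correct. The identification of the left-hand side of \eqref{eq:lemma1} with the partition function of the homogeneous pinning model is the same starting point as in the paper, but the paper then simply invokes \cite[Th.~2.2]{cf:Book} (which in fact gives sharp asymptotics $Z_N\sim c\,K(N)$ for $h<0$), whereas you give a direct, self-contained proof via the summability identity $\sum_{N\ge1}Z_N=\sum_{\ell\ge1}e^{h\ell}<\infty$. Your route is more elementary and, as you observe, does not use the tail hypothesis \eqref{eq:K} at all---only that $K(\cdot)$ is a probability on $\N$ and that $h<0$; the paper's cited result does use the regularly-varying tail, but in exchange yields the precise decay rate rather than just $Z_N\to0$. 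For the purpose of Lemma~\ref{th:puro} your argument is entirely sufficient, and your remark that arguing through the free energy would only give $\tfrac1N\log Z_N\to0$ is well taken.
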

\medskip

This is implied by \cite[Th. 2.2]{cf:Book}, since the left-hand side of 
\eqref{eq:lemma1} is nothing but the partition function of the homogeneous
pinning model of length $N$, whose critical point is $h_c=0$ ({\sl cf.} also 
\eqref{eq:annF}).

\bigskip

The second fact we need is
\medskip

\begin{lemma}
\label{th:condiz}
There exists a positive constant $c$, which depends only on
$K(\cdot)$, such that for every positive function $f_N(\tau)$ which depends
only on $\tau\cap\{1,\ldots,N\}$ one has
  \begin{equation}
  \label{eq:condiz}
\sup_{N>0}    \frac{\bE[f_N(\tau)|2N\in\tau]}{\bE[f_N(\tau)]}\le c.
  \end{equation}
\end{lemma}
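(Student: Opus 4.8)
The plan is to reduce Lemma~\ref{th:condiz}, via the renewal decomposition, to a purely deterministic estimate on the renewal mass function $\bP(n\in\tau)$, and then to establish that estimate from the sharp asymptotics \eqref{eq:doney} and $K(n)\sim C_K n^{-3/2}$. Since trivially $\bE[f_N(\tau)\ind_{2N\in\tau}]\le\bE[f_N(\tau)]$, the ratio in \eqref{eq:condiz} is at most $1/\bP(2N\in\tau)$, which is finite for each fixed $N$; hence it suffices to produce a uniform bound for all large $N$. I would condition on $g_N:=\max(\tau\cap\{0,1,\dots,N\})$, the last renewal point not exceeding $N$ (well defined since $0\in\tau$). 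On $\{g_N=\ell\}$ the restriction $\tau\cap\{1,\dots,N\}$ equals $\tau\cap\{1,\dots,\ell\}$, so both $f_N(\tau)$ and $\ind_{\{g_N=\ell\}}$ are functions of $\tau\cap\{0,\dots,\ell\}$; applying the renewal property at $\ell$ then gives, with weights $u_\ell\ge0$ not depending on the part of $\tau$ after $\ell$,
\[
\bE\big[f_N(\tau)\ind_{2N\in\tau}\big]=\sum_{\ell=0}^N u_\ell\,V_\ell,\qquad \bE\big[f_N(\tau)\big]=\sum_{\ell=0}^N u_\ell\,W_\ell,
\]
where $W_\ell:=\bP(\tau_1>N-\ell)$ and $V_\ell:=\bP(\tau_1>N-\ell,\ 2N-\ell\in\tau)$; the choice $f_N\equiv1$ yields the same identities with $u_\ell$ replaced by $\bP(\ell\in\tau)$, in particular $\bP(2N\in\tau)=\sum_{\ell=0}^N\bP(\ell\in\tau)V_\ell$.

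Given this, it is enough to prove the \emph{deterministic} bound: there is $c=c(K(\cdot))$ such that $V_\ell\le c\,\bP(2N\in\tau)\,W_\ell$ for every $0\le\ell\le N$ and every large $N$. Indeed, summing against $u_\ell\ge0$ then gives $\bE[f_N(\tau)\ind_{2N\in\tau}]\le c\,\bP(2N\in\tau)\,\bE[f_N(\tau)]$, which is \eqref{eq:condiz}. Writing $n:=N-\ell$ and decomposing $V_\ell$ according to the first renewal after $\ell$, one has
\[
V_\ell=\sum_{p=0}^{N-1}K(N+n-p)\,\bP(p\in\tau),\qquad W_\ell=\sum_{m>n}K(m).
\]

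To prove the deterministic bound I would split the sum for $V_\ell$ at $p\simeq N/2$. For $p\ge N/2$, use $\bP(p\in\tau)\le c_9(N/2)^{-1/2}$ (the constant of \eqref{eq:doney-bound}) together with $\sum_{p\ge N/2}K(N+n-p)\le\sum_{m\ge n+1}K(m)=W_\ell$, bounding this part by $C\,N^{-1/2}W_\ell$. For $p<N/2$ one has $N+n-p\ge N/2$, hence $K(N+n-p)\le\sup_{q\ge N/2}K(q)\le C\,N^{-3/2}$ by \eqref{eq:K}, while $\sum_{p=0}^{N}\bP(p\in\tau)\le 1+c_9\sum_{p=1}^{N}p^{-1/2}\le C\sqrt N$, so this part is at most $C\,N^{-3/2}\cdot C\sqrt N=C/N$. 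Finally \eqref{eq:K} gives $W_\ell=\bP(\tau_1>n)\ge c'(n\vee1)^{-1/2}\ge c'N^{-1/2}$, so $1/N\le (c')^{-1}N^{-1/2}W_\ell$, and \eqref{eq:doney} gives $N^{-1/2}\le c''\,\bP(2N\in\tau)$ for $N$ large. Combining these estimates yields $V_\ell\le c\,\bP(2N\in\tau)\,W_\ell$ with $c$ depending only on $K(\cdot)$, which completes the argument.

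The only real work is the last step, and its delicate point is uniformity in both $\ell$ and $N$: it rests on matching two-sided control of $K(n)$, of $\bP(n\in\tau)$, and of the truncated Green function $\sum_{p\le N}\bP(p\in\tau)$, which for $\alpha=1/2$ is precisely what \eqref{eq:K} and \eqref{eq:doney} supply. The factorization in the first step is routine once one checks carefully that, conditionally on $\{g_N=\ell\}$, the functional $f_N$ depends only on $\tau\cap\{1,\dots,\ell\}$, while $\{\tau_1>N-\ell\}$ and $\{2N-\ell\in\tau\}$ refer only to the independent post-$\ell$ part of the renewal.
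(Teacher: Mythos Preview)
Your proof is correct and follows essentially the same route as the paper: both decompose according to the last renewal point $g_N=X_N$ in $\{0,\ldots,N\}$ and reduce \eqref{eq:condiz} to the pointwise ratio bound $\bP(X_N=\ell\mid 2N\in\tau)/\bP(X_N=\ell)\le c$, which is exactly your inequality $V_\ell\le c\,\bP(2N\in\tau)\,W_\ell$. The only difference is that the paper outsources this last bound to \cite{cf:DGLT} (equation (A.15) there) and \cite{cf:T_cg}, whereas you supply a clean self-contained proof via the split at $p\simeq N/2$ together with \eqref{eq:K}, \eqref{eq:doney} and \eqref{eq:doney-bound}; your argument is thus a faithful expansion of what the paper cites.
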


\medskip

\noindent
\begin{proof} 
The statement follows by writing $f_N(\tau)$
as $f_N(\tau)\sum_{n=0}^N \ind_{\{X_N=n\}}$, where $X_N$ is the last renewal epoch up
to (and including) $N$, and using the bound
$$
\sup_N \max_{n=0,\ldots,N}\frac{
\bP(X_N=n \vert 2N \in \tau)}{\bP(X_N=n )} =:c<\infty ,$$
which  is equation (A.15) in \cite{cf:DGLT} 
(this has been proven also in  \cite{cf:T_cg}, where the proof is repeated to show 
that $c$ can be chosen as a function of $\ga$ only).
\end{proof}

\smallskip

\subsection{Proof of \eqref{eq:claimU}}
\label{sec:appprova}
Defining the event
\begin{equation}
  \label{eq:Omega}
  \Omega_{\underline n,\underline j}:=\{N\in\tau\;\;\mbox{and}\;\;
\{j_{r-1},\ldots,n_r\}\cap \tau=\{j_{r-1},n_r\}\;\;\mbox{for all}\;\;
  r=1,\ldots,\ell\}, 
\end{equation}
with the convention that $j_0:=0$, 
we have
\begin{equation}
  \hat Z_\go^{(i_1,\ldots,i_\ell)}=\sum_{n_1\in B_{i_1}}\ldots \sumtwo{n_\ell\in B_{N/k}:}{n_\ell\ge 
n_{\ell-1}+k}\bE\left[e^{\sum_{n=1}^N(\beta\go_n+h-\beta^2/2)\delta_n};\Omega_{\underline n,
\underline j}\right].
\end{equation}
Since $\tilde\bbP$ is a Gaussian measure and $\delta_i^2=\delta_i$ for 
every $i$,
the computation of $\tilde\bbE \hat Z_\go^{(i_1,\ldots,i_\ell)}$ is
immediate:
\begin{equation}
  \tilde\bbE \hat Z_\go^{(i_1,\ldots,i_\ell)}=\sum_{n_1\in B_{i_1}}\ldots \sumtwo{n_\ell\in B_{N/k}:}{n_\ell\ge 
n_{\ell-1}+k}\bE\left[
e^{h\sum_{n=1}^N\delta_n-\beta^2/2\sum_{i,j=1}^N \mathcal C_{ij}\delta_i\delta_j}
    ;\Omega_{\underline n,
      \underline j}\right].
\end{equation}
In view of $\mathcal C_{ij}\ge 0$, we obtain an upper bound by
neglecting in the exponent the terms such that $n_r\le i\le j_{r}$ and
$n_{r'}\le j\le j_{r'}$ with $r\ne r'$. At that point, the $\bE$
average may be factorized, by using the renewal property,
 and we obtain (recall that $\mathcal C_{ii}=0$)
\begin{equation}
\begin{split}
  \tilde\bbE\hat Z_\go^{(i_1,\ldots,i_\ell)}\, \le\, & \sum_{n_1\in B_{i_1}}\ldots \sumtwo{n_\ell\in B_{N/k}:}{n_\ell\ge 
    n_{\ell-1}+k}K(n_1)\ldots K(n_\ell-j_{\ell-1})\\
  &\times \prod_{r=1}^\ell\bE\left[\left.e^{h\sum_{i=n_r}^{j_r}\delta_i-\beta^2\sum_{n_r\le i< j\le j_r}\mathcal C_{ij}
        \delta_i\delta_j}\ind_{\{j_{r}\in\tau\}}\right|n_r\in \tau\right], 
\end{split}
\end{equation}
with the convention that $j_\ell:=N$.
We are left with the task of proving that
\begin{equation}
\label{eq:auxU}
  \bE\left[\left.e^{h\sum_{i=n_r}^{j_r}\delta_i-\beta^2\sum_{n_r\le i<j\le j_r}\mathcal C_{ij}
\delta_i\delta_j}\ind_{\{j_{r}\in\tau\}}\right|n_r\in \tau\right]\le U(j_r-n_r),
\end{equation}
with $U(\cdot)$ satisfying \eqref{eq:U}.
We remark first of all that the left-hand side of \eqref{eq:auxU} equals
\begin{equation}
  \bP(j_r-n_r\in\tau)\bE\left[\left.e^{h\sum_{i=n_r}^{j_r}\delta_i-\beta^2\sum_{n_r\le i<j\le j_r}\mathcal C_{ij}
\delta_i\delta_j}\right|n_r\in \tau,j_r\in\tau\right].
\end{equation}
Since by construction $j_r-n_r<k(h)=\lfloor 1/h\rfloor$, 
one has
\begin{equation}
e^{h\sum_{i=n_r}^{j_r}\delta_i}\, \le\,  e.
\end{equation}
As for the remaining average, assume without loss
of generality that $|\{n_r,n_r+1,\ldots,j_r\}\cap B_{i_r}|\ge (j_r-n_r)/2$
(if this is not the case, the inequality clearly holds with $B_{i_r}$
replaced by $B_{i_r+1}$ and the  arguments which follow are trivially
modified).  Then,
\begin{multline}
\label{eq:uffa}
  \bE\left[\left.e^{-\beta^2\sum_{n_r\le i< j\le j_r}\mathcal C_{ij}
\delta_i\delta_j
}\right|n_r\in \tau,j_r\in\tau
\right]\, \le
\\
 \bE\left[\left.
\exp\left(-\beta^2\sum_{0<i<j\le (j_r-n_r)/2}\delta_i\delta_j H_{ij}\right)
\right| j_r-n_r\in\tau \right].
\end{multline}
Finally, 
 the conditioning in \eqref{eq:uffa} can be eliminated using Lemma \ref{th:condiz}, and
\eqref{eq:claimU} is proved.
\qed

\subsection{Proof of Lemma \ref{th:condlemma} }

In this proof (and in the statement) two positive numbers $C_1$ and $C_2$ appear. 
$C_1$ is going to change along with the steps
of the proof: it depends on $\eta$, $k$ and on $K(\cdot)$. 
$C_2$ instead is chosen once and for all below and it 
depends only on $K(\cdot)$. 
We start by giving a name to  the right-hand side of \eqref{eq:claimU}:
\begin{multline}
  \label{eq:restart}
  Q\, :=\, 
  \sum_{n_1\in B_{i_1}}\sum_{j_1=n_1}^{n_1+k-1}\sumtwo{n_2\in B_{i_2}:}{n_2\ge
    n_1+k}\sum_{j_2=n_2}^{n_2+k-1}\ldots \sumtwo{n_{\ell-1}\in B_{i_{\ell-1}}:}
{n_{\ell-1}\ge
    n_{\ell-2}+k}\sum_{j_{\ell-1}=n_{\ell-1}}^{n_{\ell-1}+k-1}\sumtwo
{n_\ell\in B_{N/k}:}{n_\ell\ge n_{\ell-1}+k}\\
 K(n_1)\ldots K(n_\ell-j_{\ell-1})
U(j_1-n_1)\ldots   U(j_{\ell-1}-n_{\ell-1}) U(N-n_\ell)
 .
\end{multline}
Since $N -n _\ell <k$, we can get rid of $U(N-n_\ell)\, (\le c_8
 \bP(N-n_\ell \in \tau) )$ and of the right-most sum (on $n_\ell$),
 replacing $n_\ell$ by $N$, by paying a price that depends on $k$ and
 $K(\cdot)$ (this price goes into $C_1$).  Therefore we have
 \begin{equation}
  \label{eq:lems1}
 Q\, \le \, 
 C_1\,  \sum_{n_1\in B_{i_1}} \ldots
 \sum_{j_{\ell-1}=n_{\ell-1}}^{n_{\ell-1}+k-1}
 K(n_1)\ldots K(n_\ell-j_{\ell-1})
U(j_1-n_1)\ldots U(j_{\ell-1}-n_{\ell-1})
 ,
\end{equation}
where by convention from now on $n_\ell:=N$.
Now we single out the long jumps. The set of long jump arrival points is
defined as
\begin{equation}
J\, =\, J( i_1,i_2, \ldots, i_\ell)\, :=\, \left\{r:\, 1 \le r \le \ell, \, i_r>i_{r-1}+2\right\},
\end{equation}
and the definition guarantees that a long jump $\{j_{r-1},\ldots,n_r\}$ contains at least one whole
block with no renewal point inside. For $r\in J$ 
we use the bound
\begin{equation}
K(n_r-j_{r-1}) \, \le \, \frac{C_2}{(i_r-i_{r-1})^{3/2} k^{3/2}}, 
\end{equation}
and we stress that we may and do choose $C_2$ depending only on $K(\cdot)$.
For later use, we choose $C_2\ge 2^{3/2}$.
This leads to
\begin{multline}
  \label{eq:lems3}
  Q\, \le \, 
 C_1\,  {k^{-3|J| /2}}
 \prod_{r \in J} \frac{C_2}{(i_r-i_{r-1})^{3/2}}
 \\
\times \sum_{n_1\in B_{i_1}} \ldots
 \sum_{j_{\ell-1}=n_{\ell-1}}^{n_{\ell-1}+k-1}
 \left( 
 \prod_{r\in \{1,\ldots, \ell\}\setminus J} 
 K(n_r- j_{r-1})
 \right)
 \,
U(j_1-n_1)\ldots U(j_{\ell -1}-n_{\ell-1}).
\end{multline}
Now we perform the sums in \eqref{eq:lems3} and bound the outcome by
using the assumptions \eqref{eq:condlemma1} and \eqref{eq:condlemma2}.

We first sum over $j_{r-1}$, $r \in J$, keeping of course
into account the constraint $0\le j_{r-1}-n_{r-1}<k$. By using
 \eqref{eq:condlemma1} such sum yields at most $(\eta \sqrt{k})^{\vert J\vert}$
 if $1 \notin J$. If $1 \in J$, for $r=1$ then $j_0=0$ and there is no summation: 
 we can still bound the sum by $(\eta \sqrt{k})^{\vert J\vert}$,
 provided that we change the constant $C_1$.

Second, we sum over $j_{r-1},n_r$ for 
$r \in  \{1,\ldots, \ell\}\setminus J$ and use 
\eqref{eq:condlemma2}. Once again we have to treat separately the case
$r=1$, as above. But if $1 \notin \{1,\ldots, \ell\}\setminus J$ we directly see that
the summation is bounded by $\eta^{\ell -\vert J \vert}$.

Finally, we have to sum over $n_r$, for $r\in J$. The summand does not depend
on these variables anymore, so this gives at most $k^{\vert J \vert }$.

Putting these estimates together we obtain
\begin{equation}
\label{eq:lems4}
Q\, \le \, 
 C_1\,  
 \frac{
 (\eta \sqrt{k})^{\vert J\vert }
 \eta^{\ell -\vert J \vert  } k^{\vert J \vert}  
   }
 {k^{3|J| /2}}
 \prod_{r \in J} \frac{C_2}{(i_r-i_{r-1})^{3/2}}\, \le\, 
 C_1 \eta^\ell C_2^\ell 
 \prod_{r=1}^{\ell} \frac{1}{(i_r-i_{r-1})^{3/2}},
\end{equation}
where, in the last step, we have used $C_2\ge 2^{3/2}$. 
The proof of Lemma~\ref{th:condlemma} is therefore 
complete.
\qed



\section*{Acknowledgments}
We are very grateful to Bernard Derrida for many enlightening discussions and to an anonymous referee for having observed
the link between hierarchical pinning and Galton-Watson processes.
The authors acknowledge the support of ANR, grant POLINTBIO. F.T. was partially
supported also by ANR, grant LHMSHE.


\frenchspacing
\bibliographystyle{plain}

\end{document}